\newcommand{\declarecolor}[2]{\definecolor{#1}{RGB}{#2}\expandafter\newcommand\csname #1\endcsname[1]{\textcolor{#1}{##1}}}
\definecolor{plotblue}{HTML}{377eb8}
\definecolor{plotorange}{HTML}{ff7f00}
\definecolor{plotgreen}{HTML}{4daf4a}
\newtheorem{theorem}{Theorem}[section]
\newtheorem{lemma}[theorem]{Lemma}
\newtheorem{corollary}[theorem]{Corollary}
\newtheorem{condition}[theorem]{Condition}
\newtheorem{assumption}[theorem]{Assumption}
\newtheorem{example}[theorem]{Example}
\newtheorem{definition}[theorem]{Definition}
\newcommand{\snr}{\mathsf{snr}}
\newcommand{\cN}{\mathcal{N}}
\newcommand{\cB}{\mathcal{B}}
\newcommand{\cV}{\mathcal{V}}
\newcommand{\cU}{\mathcal{U}}
\newcommand{\cY}{\mathcal{Y}}
\newcommand{\vr}{\mathbf{r}}
\newcommand{\vu}{\mathbf{u}}
\newcommand{\vy}{\mathbf{y}}
\newcommand{\vbeta}{\boldsymbol{\beta}}
\NewDocumentCommand{\pv}{m e{_} m}{%
  #1\IfValueT{#2}{_{#2}}^{(#3)}%
}
\newcommand{\range}[1]{[\![#1]\!]}
\newcommand{\E}{\mathbb E}
\begin{document}
\title{Strategyproof Decision-Making in Panel Data Settings and Beyond\footnote{A version of this paper was published as an extended abstract in SIGMETRICS 2024 ($50$th ACM SIGMETRICS International Conference on Measurement and Modeling of Computer Systems).}}
%
%

\author[1]{Keegan Harris}
\author[2]{Anish Agarwal\thanks{For part of this work, Anish was a postdoc at Amazon, Core AI.}}
\author[3,4]{Chara Podimata\thanks{For part of this work, Chara was a FODSI postdoc at UC Berkeley.}}
\author[1]{Zhiwei Steven Wu}

\affil[1]{Carnegie Mellon University}
\affil[2]{Columbia University}
\affil[3]{Massachusetts Institute of Technology}
\affil[4]{Archimedes/Athena RC}
\affil[ ]{\texttt {\{keeganh, zstevenwu\}@cmu.edu}}
\affil[ ]{\texttt{aa5194@columbia.edu, podimata@mit.edu}}
\date{}
\maketitle
\vspace*{-0.4in}

\pagenumbering{gobble}
=\begin{abstract}
We consider the problem of decision-making using \emph{panel data}, in which a decision-maker gets noisy, repeated measurements of multiple \emph{units} (or \emph{agents}). 
We consider the setup used in synthetic control methods, where there is a pre-intervention period when the principal observes the outcomes of each unit, after which the principal uses these observations to assign a treatment to each unit.
Unlike this classical setting, we permit the units generating the panel data to be strategic, i.e. units may modify their pre-intervention outcomes in order to receive a more desirable intervention. 
The principal's goal is to design a \emph{strategyproof} intervention policy, i.e. a policy that assigns units their \khedit{utility-maximizing} interventions despite their potential strategizing.
We first identify a necessary and sufficient condition under which a strategyproof intervention policy exists, and provide a strategyproof mechanism with a simple closed form when one does exist.
Along the way, we prove impossibility results for \emph{strategic multiclass classification}, a natural extension of the well-studied (binary) strategic classification problem to the multiclass setting, which may be of independent interest.
When there are two interventions, we establish that there always exists a strategyproof mechanism, and provide an algorithm for learning such a mechanism. 
For three or more interventions, we provide an algorithm for learning a strategyproof mechanism if there exists a sufficiently large gap in the principal's rewards between different interventions. 
Finally, we empirically evaluate our model using real-world panel data collected from product sales over 18 months. 
We find that our methods compare favorably to baselines which do not take strategic interactions into consideration, even in the presence of model misspecification.\looseness-1
\end{abstract}
\newpage
\tableofcontents
\newpage
\pagenumbering{arabic}

\section{Introduction}\label{sec:intro}
In \emph{panel data} (or \emph{longitudinal data}) settings, one observes repeated, noisy, measurements of a collection of \emph{units} over a period of time, during which the units undergo different interventions.
For example, units can be individuals, companies, or geographic locations, and interventions can 
represent discounts, health therapies, or tax regulations.
%
%
This is a ubiquitous way to collect data, and, as a result, the analysis of panel data has a long history in econometrics and statistics.
A common goal in the literature is to analyze how a \emph{principal} (e.g., business platform, regulatory agency) can do ``counterfactual inference'', i.e., estimate what will happen to a unit if it undergoes a variety of possible interventions.
The ultimate goal of such counterfactual inference is to enable data-driven decision-making, where one does not just estimate statistical parameters of interest, but actually uses data to make better decisions. In medical domains, for example, the goal typically is not just estimating  health outcomes for patients under different health therapies, but also a policy that selects appropriate therapies for new patients.
However, the leap from counterfactual inference to data-driven decision-making comes with additional challenges: namely, when units know that they will be assigned disparate interventions based on their reported data, they have incentives to strategize with their reports.
Such strategic interactions in panel data settings are observed in practice. 
For example,~\citet{caro2010zara} observe that Zara store managers strategically misreported store inventory information to higher-ups in order to maximize sales at their local branch.\looseness-1

A running example we will use throughout this paper is that of an e-commerce platform that wishes to give one of several possible discounts (interventions) to a new user to maximize some future metric of interest, say, engagement levels. 
\khedit{In this example time-steps are days/weeks/months, units are users, and outcomes are engagement levels.\footnote{\khedit{Another motivating example is the incentives of store managers in inventory management, as discussed in~\citet{caro2010zara}. Here, time-steps are days/weeks/months, units are different products at a given store, outcomes are (reported) inventory levels, and interventions are different levels of discount.}}} 
Suppose the company uses historical data to build a model that estimates the ``counterfactual'' trajectory of engagement levels of a new user under different discount policies, based on their observed trajectory of engagement levels thus far.
If a user knew this were the case, then there is a clear incentive for them to strategically modify their engagement levels to receive a larger discount.
Such strategic manipulations in response to data-driven decision-making have
been observed in other domains such as lending \citep{homonoff2021does} and search engine optimization \citep{davis2006search}.
%
In this paper, we focus on \emph{strategyproof} intervention policies, i.e., policies that assign the \khedit{utility-maximizing} treatment to the units despite them strategically altering their data. Concretely, we answer two questions:
\vspace{-3mm}

{\centerline \em {\bf Q1:} Is it possible to design 
intervention policies 
that are robust to strategic modification of data by units to receive a more favorable intervention? 
We call such policies \emph{strategyproof}.}

\vspace{-3mm}

{\centerline \em {\bf Q2:} Can we leverage the structure typically present in panel data to derive computationally-efficient algorithms for learning strategyproof intervention policies?}
\vspace{2mm}

Towards answering both questions, and in line with the e-commerce example described above, we build upon the framework for counterfactual inference with panel data called \emph{synthetic interventions} \citep{SI}, which itself is a generalization of the canonical framework of \emph{synthetic control} \citep{abadie2003economic, abadie2010synthetic}.
In both settings, there is a notion of a ``pre-intervention'' time period when all units are under control (i.e., no intervention), followed by a ``post-intervention'' time period, when each unit undergoes \emph{exactly one} of many possible interventions (including control).
Synthetic control methods can be used to estimate the counterfactual outcome if a unit did not undergo an intervention, i.e., remained under control.
Given its simplicity, synthetic control has become a ubiquitous technique in econometrics to estimate counterfactuals under control in high-stakes domains including e-commerce, healthcare, and policy evaluation. 
Synthetic interventions is a generalization which allows one to estimate counterfactual outcomes not just under control, but also under intervention.
Despite being a relatively new development, the synthetic interventions framework has been used by several companies across therapeutics, ride-sharing, and e-commerce, to estimate the best intervention for a given unit. 
Given the growing popularity of the synthetic control and synthetic interventions frameworks in high-stakes decision-making, an issue that will likely arise is the strategic manipulation of observed outcomes by units. 
In particular, just as in the e-commerce example, units have an incentive to manipulate their pre-intervention outcomes in order to get a more favorable intervention during the post-intervention period. 
Indeed, if such strategic manipulations are not taken into account, we establish that the synthetic interventions estimator can perform poorly---see~\Cref{sec:not-sp} for a formal example of such a scenario and~\Cref{sec:experiments} for an empirical example based on real-world e-commerce data. 
The goal of the principal in such strategic settings is to assign the ``correct'' intervention to each unit---the intervention which maximizes some objective function such as total user engagement---during the post-intervention period, despite possible strategic manipulations to the unit's pre-treatment behavior. 
As we observe, this can be thought of as designing a mechanism for {\em strategyproof multi-class classification} with panel data.
Although many of our results apply more broadly to strategyproof multi-class classification, we focus on panel data (and in particular the setup of synthetic control and synthetic interventions) due to its ubiquity and for concreteness.

\subsection{Our Contributions}
The first contribution of our work is a general framework for decision-making in the presence of strategic agents in the panel data setting. 
Our model is formally defined in Section~\ref{sec:setting}, but for the purposes of exposition, we outline it informally here as well. 
In our setting, a principal originally observes historical data from pre- and post- intervention outcomes of $n$ units. 
Each of these units $i \in \{1, \ldots, n\}$ is randomly assigned an intervention $d_i$ as in an A/B test (i.e., a randomized control trial), so they do not have incentive to strategize with their pre-intervention outcomes. 
After observing this historical data, the principal commits to an intervention policy $\pi$, which uses a unit's pre-intervention outcomes to assign its intervention during the post-intervention period. 
Then $m$ new units arrive and strategically modify their pre-treatment outcomes; units are allowed to move in a ball of radius $\delta$ around their true pre-intervention outcomes, but they are not further constrained. 
We call this behavior \emph{best-responding} and formally define it in Definition~\ref{ass:mod}. 
Finally, the principal observes the altered pre-intervention outcomes, assigns interventions according to policy $\pi$, and collects the post-intervention rewards. 
The goal of the principal in this setting is to deploy a policy $\pi$ that is \emph{strategyproof}, i.e., assigns the \khedit{utility-maximizing} intervention to each unit, despite the fact that they may have strategically modified their pre-intervention outcomes. 
We call this utility-maximizing intervention the unit's \emph{type}.

Given that units know the principal's policy $\pi$ and they are allowed to best respond anywhere within $\delta$ of their true pre-intervention outcome, it may seem like {\bf Q1} has a negative answer. 
However, in Section~\ref{sec:characterizing}, we derive the \emph{necessary and sufficient conditions} for a strategyproof intervention policy to exist.
In order to obtain this full characterization, we translate our the principal's problem of assigning interventions to the dual space (i.e., the space of the units' actions), and derive properties that units of the same type must share.

Our next contribution is to specialize our characterization of strategyproof intervention policies to the setting where unit outcomes at each round are determined by a latent factor model. 
Under a latent factor model (a natural and popular assumption in panel data settings), unit outcomes under different interventions are linearly dependent on a unit latent factor (i.e, their type), as well as a latent vector that is time- and intervention-specific. 
In this specification, the rewards of the principal are linear in their (expected) pre-intervention outcomes, which implies the units are {\em linearly separable} based on their type.
We show that our necessary and sufficient condition for a strategyproof intervention policy to exist is always satisfied when there are two interventions (\Cref{cor:two}), but it is in general \emph{not} satisfied for three or more interventions (\Cref{ex:impossible}). 
The intuition for both results is that in order for an intervention policy to be strategyproof under the latent factor model, the principal has to shift the decision boundaries by some amount in order to account for the fact that units are able to strategize. 
When the number of interventions is more than two, it may be the case that there are units of different types whose pre-intervention outcomes are ``close enough'' to each other such that no movement of the boundary can prevent all units from fooling the principal. 
Importantly, in Section~\ref{sec:learning} we show that assigning three or more interventions in a panel data setting with strategic agents can be interpreted as an instance of \emph{strategic multiclass classification}, a natural generalization of the well-studied strategic (binary) classification problem to the multiclass setting. 
As such, our impossibility result for strategyproof intervention policies translates to an impossibility result for strategyproof classification with three or more classes. 
To the best of our knowledge, we are the first to both draw this connection, and discuss strategic multiclass classification altogether. 

Having characterized strategyproof intervention policies under a latent factor model, we shift our focus to {\bf Q2}. For the case of two interventions (i.e a single treatment and control), we provide an algorithm for learning a strategyproof intervention policy from historical data (\Cref{alg:sp-two-action}). 
The analysis of~\Cref{alg:sp-two-action} relies on two steps: First, we upper-bound the difference between the reward of the learned policy and that of the optimal policy by the estimation error on the rewards of the test units (\Cref{cor:two-learn}). 
Second, we leverage bounds for estimation from the ``error-in-variables'' regression literature to derive end-to-end finite sample guarantees (\Cref{thm:out-of-sample}). 
We show that under relatively minor algebraic assumptions, our intervention policy is asymptotically optimal in the limit of infinite data. 
Next, we provide analogous finite sample guarantees for an extension of~\Cref{alg:sp-two-action} to the setting with an arbitrary number of treatments (\Cref{alg:learning-sp})---under an additional assumption on the difference in rewards between the optimal and next-best intervention for each type of unit (\Cref{ass:margin-app}). 
Relaxing this assumption on the reward gap appears challenging; we provide evidence for the necessity of such a condition in~\Cref{ex:1D}.

Finally, we complement our theoretical results with experiments based on panel data from product sales at several stores over the course of 18 months. 
We find the data is well-approximated by a latent factor model, and that the intervention policy of~\Cref{alg:sp-two-action} outperforms a baseline policy which does not take strategic interactions into consideration---even when the algorithm's estimate of $\delta$ (the unit effort budget) is misspecified. 
\subsection{Related Work}\label{sec:related}
Our work is broadly related to two lines of research in algorithmic game theory and econometrics: algorithmic decision-making under incentives and learning from panel data.
%
\paragraph{Strategic responses to algorithmic decision making}
A growing line of work at the intersection of computer science and economics aims to model the effects of using algorithmic assessment tools in high-stakes decision-making settings (e.g., \cite{hardt2016strategic,dong2018strategic, chen2020learning, kleinberg2020classifiers, shavit2020causal, munro2020learning, ahmadi2021strategic,bechavod2021gaming,bechavod2022information,ghalme2021strategic,harris2021bayesian,harris2021stateful,harris2022strategic,jagadeesan2021alternative, levanon2021strategic}). 
\citet{hardt2016strategic} introduce the problem of \emph{strategic classification}, in which a ``jury'' (principal) deploys a classifier, and a ``contestant'' (agent), best-responds by strategically modifying their observable features. 
Subsequent work has studied online learning settings \cite{dong2018strategic,chen2020learning,ahmadi2021strategic}, repeated interactions \cite{harris2021stateful}, social learning settings \cite{bechavod2022information}, and settings in which the model being used to make decisions is (partially) unknown to the strategic agents \cite{ghalme2021strategic, harris2021bayesian, bechavod2022information}.
Perhaps the line of work most relevant to ours is that of \cite{shavit2020causal,munro2020learning, bechavod2021gaming,harris2022strategic}, which aims to identify causal relationships between observable features and outcomes in the presence of strategic responses to various linear models. 
In contrast, we study a panel data setting in which the principal must assign one of several interventions to strategic units based on longitudinal data which may not have any underlying linear structure.
In Appendix~\ref{sec:sc}, we discuss the connections between our panel data setting and that of \emph{multiclass} strategic classification.
In particular, intervening on strategic units which exhibit a latent factor model structure may be viewed as a particular instance of multiclass classification where agents strategically modify their observable features. 
We are the first to study such a multiclass strategic classification setting, to the best of our knowledge, and we find that 
new ideas are required to handle the multiclass nature of the decision-making problem at hand.\looseness-1

\paragraph{Panel data methods in econometrics}
As stated earlier, this is a setting where one gets repeated measurements of multiple heterogeneous units over time. 
Prominent frameworks for causal estimation using panel data include \emph{difference-in-differences} \cite{ashenfelter1984using, bertrand2004much, harmless_econometrics} and \emph{synthetic control methods} \cite{abadie2003economic, abadie2010synthetic, Hsiao12, imbens16, athey1, LiBell17, xu_2017, rsc, mrsc, Li18, ark, bai2020matrix, asc, Kwok20, chernozhukov2020practical, fernandezval2020lowrank, agarwal2020robustness, agarwal2020principal}.
In these frameworks, there is a notion of a ``pre-intervention'' period where all the units are under control (i.e, no intervention), after which a subset of units receive one of many possible interventions.
The goal of these works is to estimate what would have happened to a unit that undergoes an intervention (i.e., a ``treated'' unit) if it had remained under control (i.e., no intervention), in the potential presence of unobserved confounding.
That is, they estimate the counterfactual if a treated unit remains under control for all $T$ time-steps.
A critical aspect that enables the methods above is the structure between units and time under control. 
One elegant encoding of this structure is through a \emph{latent factor model} (also known as an interactive fixed effect model), \cite{chamberlain, liang_zeger, arellano, bai03, bai09, pesaran, moon_15, moon_weidner_2017}.
In such models, it is posited that there exist low-dimensional latent unit and time factors that capture unit- and time-specific heterogeneity, respectively, in the potential outcomes.
Since the goal in these works is to estimate outcomes under control, no structure is imposed on the potential outcomes under intervention.
In \cite{SI, agarwal2021causal}, the authors extend this latent factor model to incorporate latent factorization across interventions as well, which  allows for identification and estimation of counterfactual mean outcomes under intervention rather than just under control.
In essence, we extend these previous works to allow for the pre-intervention outcomes to be strategically manipulated by units to receive a more favorable intervention.
What we find noteworthy is that the latent factor model typically assumed in these settings leads to strategyproof estimators that have a simple closed form.\looseness-1

\section{Model: Strategic Interactions in Panel Data Settings}\label{sec:setting}
%
%
\paragraph{Notation}
%
Subscripts are used to index the unit and time-step, superscripts are reserved for interventions. 
We use $i$ to index units, $t$ time-steps, and $d$ interventions.
For $x \in \mathbb{N}$, we use the shorthand $\range{x} := \{1, 2, \ldots, x\}$ and $\range{x}_0 := \{0, 1, \ldots, x-1\}$. 
%
%
%
%
Finally, all proofs which are not in the main body may be found in the Appendix.

\paragraph{Decision making in panel data settings} 
Consider a setting in which the principal observes the outcomes of $m$ units for $T$ time-steps, where $\pv{y}{d}_{i,t} \in \mathbb{R}$ is the outcome of unit $i$ at time $t$ under intervention $d$.
We assume that unit outcomes are generated via a \emph{latent factor model}, a popular assumption in the panel data setting (e.g., references in~\Cref{sec:related}). 
Most work in panel data with latent factor models assumes there is measurement noise in the outcomes. 
While we consider settings with measurement noise in~\Cref{sec:learning}, our results in~\Cref{sec:characterizing} do not depend on this noise and so we present a simpler setup here without noise for ease of exposition.
\begin{assumption}[Latent Factor Model]\label{ass:lfm}
The outcome for unit $i$ at time $t$ under treatment $d \in \range{k}_0$ is
\begin{equation*}
    y_{i,t}^{(d)} = \langle \vu_t^{(d)}, \mathbf{v}_i \rangle,
\end{equation*}
where $\vu_t^{(d)} \in \mathbb{R}^s$ is a latent factor which depends only on time $t$ and intervention $d$, and $\mathbf{v}_i \in \mathbb{R}^s$ is a latent factor which only depends on unit $i$.
For simplicity, we assume $|\pv{y}{d}_{i,t}| \leq 1$.\looseness-1
\end{assumption}
Note that~\Cref{ass:lfm} does not require the principal to know $\vu_t^{(d)}$ or $\mathbf{v}_i$\khedit{, and that $y_{i,t}^{(d)}$ is observed for only one intervention $d \in \range{k}_0$}. 
We assume that the latent dimension $s$ is known to the principal for ease of analysis, although several principled heuristics exist for estimating $s$ in practice from data (see, e.g. \cite[Section 2.2.1]{SI} for details).

Consider a pre-intervention period of $T_0$ time-steps, for which each unit is under the same intervention, i.e., under \emph{control}. 
After the pre-intervention period, the principal assigns an \emph{intervention} $d_i \in \range{k}_0$ to each unit $i \in \range{m}$. 
Without loss of generality, we denote control by $d=0$.
Once assigned intervention $d_i$, unit $i$ remains under $d_i$ for the remaining $T-T_0$ time-steps.  
We use 
\begin{equation*}
    \vy_{i,pre} := [y_{i,1}^{(0)}, \ldots, y_{i,T_0}^{(0)}]^\top \in \mathbb{R}^{T_0}
\end{equation*}
to refer to the set of unit $i$'s pre-treatment {\em observed} outcomes under control, and 
\begin{equation*}
    \pv{\vy}{d}_{i,post} := [y_{i,T_0+1}^{(d)}, \ldots, y_{i,T}^{(d)}]^\top \in \mathbb{R}^{T-T_0}
\end{equation*}
to refer to the set of unit $i$'s post-intervention {\em potential} outcomes under intervention $d$.
We denote the set of possible pre-treatment outcomes by $\cY_{pre}$. 
%
%

\begin{definition}[Intervention Policy]
An intervention policy $\pi: \mathcal{Y}_{pre} \rightarrow \range{k}_0$ is a (deterministic) mapping from pre-treatment outcomes to interventions.
\end{definition}

For a given unit $i$, we denote the intervention assigned to them by intervention policy $\pi$ as $d_i^{\pi}$.\footnote{We sometimes use the shorthand $d_i = d_i^{\pi}$ when the intervention policy is clear from the context.}
Given an intervention policy $\pi$, units may have an incentive to strategically modify their pre-treatment outcomes in order to receive a more desirable intervention.
In our e-commerce example, this would correspond to users strategically modifying their engagement levels for the pre-intervention period (e.g., by artificially reducing their time spent on the platform), to ``trick'' the online marketplace into assigning them a higher discount than the one which would maximize the marketplace's revenue in the post-intervention period.
More generally, we study a game between a principal and a population of units. The principal moves first by committing to an intervention policy. 
Each unit then \emph{best-responds} to the given intervention policy by strategically modifying their pre-intervention outcomes as follows:
\begin{definition}[Strategic Responses to Intervention Policies]\label{ass:mod}
Assume that interventions are ordered in increasing unit preference (i.e., units prefer $d$ to $d'$ for $d > d'$).
Given an intervention policy $\pi: \mathcal{Y}_{pre} \rightarrow \range{k}_0$, unit $i$ best-responds to $\pi$ by modifying their pre-treatment outcomes as
%
\begin{equation*}
\begin{aligned}
\Tilde{\vy}_{i,pre} \in \; &\arg\max_{\widehat{\vy}_{i,pre} \in \mathcal{Y}_{pre}} \;\; \pi(\widehat{\vy}_{i,pre})\\  
\text{s.t.} \;\; &\| \widehat{\vy}_{i,pre} - \vy_{i,pre}\|_2 \leq \delta,
\end{aligned}
\end{equation*}
where $\delta \in \mathbb{R}_{>0}$ is the unit effort budget and is known to the principal. 
We assume that if a unit is indifferent between two modifications, they chose the one which requires the smallest effort investment.
%
%
\end{definition}
By~\Cref{ass:mod}, the goal of each unit is to obtain the most desirable intervention possible when interventions are assigned according to $\pi$, subject to the constraint that their modification is bounded in $\ell_2$ norm by $\delta$. 
Such budget assumptions are common in the literature on algorithmic decision making in the presence of strategic agents (e.g., \cite{chen2020learning,kleinberg2020classifiers,harris2021stateful, harris2023strategic}), and are useful for modeling ``hard constraints'' in a unit's ability to manipulate.
For example, in some settings the manipulation of pre-treatment outcomes may have some associated monetary cost, and units may have a fixed budget which they cannot exceed. 
In other settings the manipulation of pre-treatment outcomes may take time, and the $\delta$-ball represents the set of all possible pre-treatment outcomes a unit could achieve in the amount of time in the pre-treatment period.
Given~\Cref{ass:mod}, the goal of the principal is to design an intervention policy to maximize their \emph{reward} in the presence of such strategic manipulations.\looseness-1
\begin{definition}(Principal Reward)\label{def:reward}
The principal's reward for unit $i$ under intervention $d$ is a weighted sum of unit $i$'s outcomes in the post-treatment time period. 
Specifically
\begin{equation*}
    r_i^{(d)} = \sum_{t = T_0 + 1}^T \omega_t \cdot y_{i,t}^{(d)},
\end{equation*}
where $\omega_t \in \mathbb{R}$ for $t > T_0$ are known to the principal.
\end{definition}
Linear rewards can capture many settings; e.g. in e-commerce, the online marketplace may wish to maximize the total amount of user engagement on the platform in the post-intervention period (this corresponds to $\omega_{t} = 1$ for $t > T_0$).
\begin{definition}[Unit Type]\label{def:type}
    We say that unit $i$ is of \emph{type} $d$ (with respect to the principal's reward) if assigning them intervention $d$ maximizes the principal's reward. 
    Formally, unit $i$ is of type $d$ if 
    \begin{equation*}
        d \in \arg\max_{d' \in \range{k}_0} \pv{r}{d'}_i,
    \end{equation*}
    where $\pv{r}{d'}_i$ is defined as in~\Cref{def:reward}.
\end{definition}
While in general the principal's reward for a unit $i$ may be a function of \emph{all} of unit $i$'s outcomes (not just those in the post-intervention period), we only consider intervention policies which intervene after a fixed pre-treatment time period (for which all units are under control), in line with the synthetic interventions and synthetic controls literature.
%
As we show, the principal's reward for a given unit may be rewritten as a function of that unit's pre-treatment outcomes when an additional linear span assumption is satisfied.
\begin{assumption}[Linear Span Inclusion]\label{ass:span}
    Consider the latent factor model of~\Cref{ass:lfm}. 
    We assume that
    \begin{equation*}
        \sum_{t=T_0+1}^T \omega_t \cdot \pv{\vu}{d}_t \in \mathrm{span}\{\pv{\vu}{0}_1, \ldots, \pv{\vu}{0}_{T_0} \}.
    \end{equation*}
\end{assumption}
\Cref{ass:span} can be viewed as a form of ``causal transportability'' \emph{over time} which allows the principal to learn something about potential outcomes in the post-intervention time period from the pre-intervention time period. 
Such assumptions are fairly common in the literature on learning from panel data (e.g. \cite{amjad2018robust,SI}).

Next we show that under~\Cref{ass:lfm} and~\Cref{ass:span}, the principal's reward can be written as a function of their pre-intervention outcomes under control. 
This will be a useful structural condition for later results. 
\begin{lemma}[Reward Reformulation]\label{prop:reward}
Under~\Cref{ass:span}, $r_i^{(d)}$ may be rewritten as
\begin{equation*}
    r_i^{(d)} = \langle \pv{\vbeta}{d}, \vy_{i,pre} \rangle,
\end{equation*}
for some $\pv{\vbeta}{d} \in \mathbb{R}^{T_0}$, where $\vy_{i,pre}$ are the (unmodified) pre-intervention outcomes for unit $i$.
\end{lemma}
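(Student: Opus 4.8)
The plan is to exploit the bilinearity of the inner product in the latent factor model (\Cref{ass:lfm}) together with the span condition (\Cref{ass:span}) to collapse the entire weighted sum over the post-intervention horizon onto the unit factor $\mathbf{v}_i$, and then to re-express the resulting inner product as a linear functional of the pre-intervention outcomes.

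First I would substitute the latent factor model into the definition of the reward (\Cref{def:reward}) and pull the known weights through the inner product. Writing $\mathbf{w}^{(d)} := \sum_{t=T_0+1}^T \omega_t \pv{\vu}{d}_t \in \mathbb{R}^s$, bilinearity gives
\[
r_i^{(d)} = \sum_{t=T_0+1}^T \omega_t \langle \pv{\vu}{d}_t, \mathbf{v}_i\rangle = \langle \mathbf{w}^{(d)}, \mathbf{v}_i\rangle.
\]
The key structural observation is that $\mathbf{w}^{(d)}$ depends only on the intervention $d$ (through the latent time–intervention factors and the known weights $\omega_t$), and not on the unit $i$.

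Next I would invoke \Cref{ass:span}, which states precisely that $\mathbf{w}^{(d)}$ lies in the span of the pre-intervention control factors $\{\pv{\vu}{0}_1, \ldots, \pv{\vu}{0}_{T_0}\}$. Hence there exist coefficients $\beta_1^{(d)}, \ldots, \beta_{T_0}^{(d)} \in \mathbb{R}$ — again independent of $i$ — with $\mathbf{w}^{(d)} = \sum_{t=1}^{T_0}\beta_t^{(d)}\pv{\vu}{0}_t$. Substituting and applying bilinearity once more,
\[
r_i^{(d)} = \Big\langle \sum_{t=1}^{T_0}\beta_t^{(d)}\pv{\vu}{0}_t,\; \mathbf{v}_i\Big\rangle = \sum_{t=1}^{T_0}\beta_t^{(d)} \langle \pv{\vu}{0}_t, \mathbf{v}_i\rangle = \sum_{t=1}^{T_0}\beta_t^{(d)}\, \pv{y}{0}_{i,t}.
\]
Since $\pv{y}{0}_{i,t}$ for $t \in \range{T_0}$ is exactly the $t$-th coordinate of $\vy_{i,pre}$, setting $\pv{\vbeta}{d} := [\beta_1^{(d)}, \ldots, \beta_{T_0}^{(d)}]^\top$ yields $r_i^{(d)} = \langle \pv{\vbeta}{d}, \vy_{i,pre}\rangle$, as claimed.

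There is no genuine obstacle here; once the model and the span assumption are in hand, the argument is pure linear algebra. The one point that warrants care is verifying that $\pv{\vbeta}{d}$ is common to all units — that it absorbs only the latent time–intervention structure and the weights, and never the unit-specific factor $\mathbf{v}_i$ — since this uniformity across units is exactly what makes the reformulation useful for the separation and policy-design results that follow. (If the control factors are linearly dependent the coefficients $\beta_t^{(d)}$ need not be unique, but any valid choice suffices for the statement.)
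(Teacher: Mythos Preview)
Your proposal is correct and follows essentially the same approach as the paper: both substitute the latent factor model into the reward to obtain $r_i^{(d)} = \langle \sum_{t>T_0}\omega_t \pv{\vu}{d}_t,\mathbf{v}_i\rangle$, then invoke \Cref{ass:span} to rewrite the aggregated post-intervention factor as $\sum_{t\le T_0}\beta_t^{(d)}\pv{\vu}{0}_t$ and read off $\langle \pv{\vbeta}{d},\vy_{i,pre}\rangle$. Your added remarks about unit-independence of $\pv{\vbeta}{d}$ and possible non-uniqueness of the coefficients are sound but not needed for the statement.
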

\begin{proof}
    From~\Cref{ass:lfm} and~\Cref{def:reward},
    \begin{equation*}
    \begin{aligned}
        r_i^{(d)} = \left \langle \sum_{t = T_0 + 1}^T \omega_t \cdot \vu_t^{(d)}, \mathbf{v}_i \right \rangle.
    \end{aligned}
    \end{equation*}
    Applying $\sum_{t=T_0+1}^T \omega_t \cdot \pv{\vu}{d}_t \in \mathrm{span}(\{\pv{\vu}{0}_1, \ldots, \pv{\vu}{0}_{T_0} \})$, we may rewrite $r_i^{(d)}$ as 
    \begin{equation*}
    \begin{aligned}
        r_i^{(d)} = \left \langle \sum_{t = 1}^{T_0} \pv{\beta}{d}_t \cdot \vu_t^{(0)}, \mathbf{v}_i \right \rangle
    \end{aligned}
    \end{equation*}
    for some $\pv{\vbeta}{d} = [\pv{\beta}{d}_1, \ldots, \pv{\beta}{d}_{T_0}]^\top \in \mathbb{R}^{T_0}$.
\end{proof}
Observe that any strategic modification by a unit in the pre-intervention period does \emph{not} change their latent factor $\mathbf{v}$ (or therefore, their post-intervention outcomes).
Given knowledge of a unit's latent factor, it would be trivial for the principal to assign them their \khedit{utility-maximizing} intervention. 
However, this knowledge is usually not available; instead it must be estimated from the unit's (strategically modified) pre-treatment behavior.
Therefore, we are interested in characterizing and learning intervention policies which assign the \khedit{utility-maximizing} intervention to each unit in the presence of strategic manipulations. 
Borrowing language from the game theory literature, we refer to such intervention policies as \emph{strategyproof}.\looseness-1
\begin{definition}[Strategyproof Intervention Policy]\label{def:spip}
An intervention policy $\pi$ is strategyproof if 
\begin{equation*}
    \pi(\Tilde{\vy}_{i,pre}) = \arg \max_{d \in \range{k}_0} \pv{r}{d}_i
\end{equation*}
for every unit $i$, where $\Tilde{\vy}_{i,pre} \in \mathbb{R}^{T_0}$ are unit $i$'s strategically-modified pre-treatment outcomes according to~\Cref{ass:mod}.
\end{definition}
See~\Cref{fig:summary} for a summary of the setting we consider.
\begin{figure}[htbp]
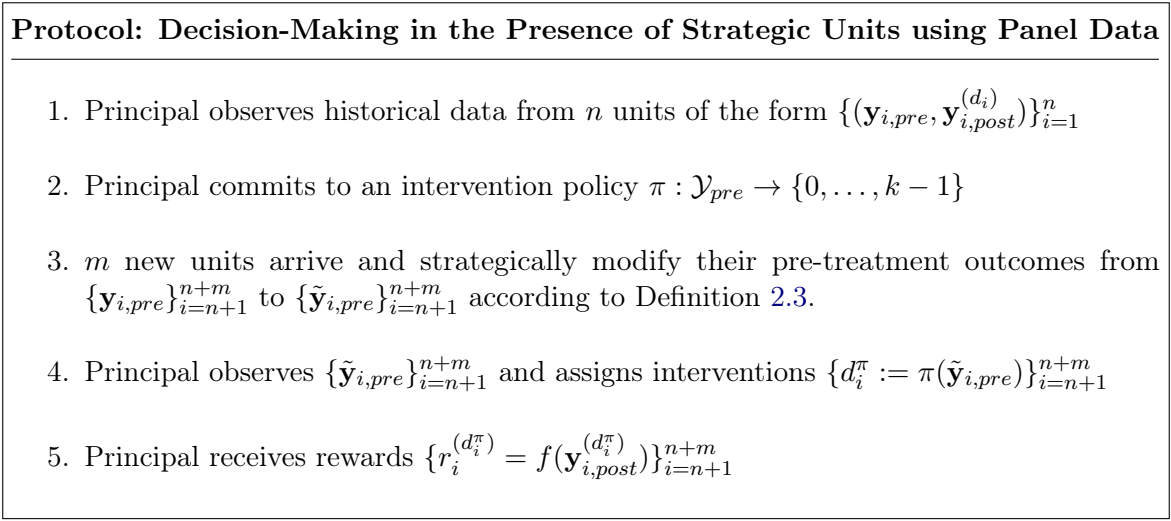

    \centering
    \noindent \fbox{\parbox{0.95\textwidth}{
    \vspace{0.1cm}
    \textbf{Protocol: Decision-Making in the Presence of Strategic Units using Panel Data}
    
    \vspace{-0.2cm}
    \noindent\rule{15.1cm}{0.4pt}
    \begin{enumerate}
        \item Principal observes historical data from $n$ units of the form $\{(\vy_{i,pre}, \vy_{i,post}^{(d_i)})\}_{i=1}^n$\vspace{2mm}
        \item Principal commits to an intervention policy $\pi: \mathcal{Y}_{pre} \rightarrow \{0, \ldots, k-1\}$\vspace{2mm}
        \item $m$ new units arrive and strategically modify their pre-treatment outcomes from $\{\vy_{i, pre}\}_{i=n+1}^{n+m}$ to $\{\Tilde{\vy}_{i, pre}\}_{i=n+1}^{n+m}$ according to~\Cref{ass:mod}.\vspace{2mm} 
        \item Principal observes $\{\Tilde{\vy}_{i, pre}\}_{i=n+1}^{n+m}$ and assigns interventions $\{d_{i}^{\pi} := \pi(\Tilde{\vy}_{i, pre})\}_{i=n+1}^{n+m}$\vspace{2mm}
        \item Principal receives rewards $\{\pv{r}{d_{i}^{\pi}}_{i} = f(\pv{\vy}{d_i^{\pi}}_{i,post})\}_{i=n+1}^{n+m}$\vspace{1mm}
    \end{enumerate}
    }}
    \caption{Protocol for decision-making in our panel data setting under strategic responses.}
    \label{fig:summary}
\end{figure}
\subsection{Background on Principal Component Regression (PCR)}\label{sec:PCR-appendix}
Several of our results in~\Cref{sec:learning} leverage \emph{principal component regression (PCR)}~\citep{jolliffe1982note} to obtain end-to-end finite sample guarantees. 
At a high level, PCR learns a linear relationship between covariates and outcomes by first ``de-noising'' the covariates via hard singular value thresholding, then learning a linear relationship between the de-noised covariates and the outcome of interest. 
Formally, let $$\pv{Y}{d}_{pre} := [\vy_{i,pre}^\top : i \in \pv{\cN}{d}] \in \mathbb{R}^{\pv{n}{d} \times T_0}$$ and 
denote the singular value decomposition of $\pv{Y}{d}_{pre}$ as $$\pv{Y}{d}_{pre} = \sum_{l=1}^{\pv{n}{d} \wedge T_0} \pv{s}{d}_l \pv{\widehat{\vu}}{d}_l (\pv{\widehat{\mathbf{v}}}{d}_l)^\top,$$ where $\pv{s}{d}_l \in \mathbb{R}$ is the $l$-th singular value, $\pv{\widehat{\vu}}{d}_l \in \mathbb{R}^{\pv{n}{d}}$ is the $l$-th left singular vector, and $\pv{\widehat{\mathbf{v}}}{d}_l \in \mathbb{R}^{T_0}$ is the $l$-th right singular vector. 
Denote the vector of observed principal rewards under intervention $d$ as $$\pv{\vr}{d} := [\pv{r}{d}_i: i \in \pv{\cN}{d}]^\top \in \mathbb{R}^{\pv{n}{d}}.$$ 
For a given hyperparameter $p \leq \pv{n}{d} \wedge T_0$, we can use PCR to estimate $\pv{\vbeta}{d}$ as 
\begin{equation*}
    \pv{\widehat{\vbeta}}{d} := \left( \sum_{l=1}^p \frac{1}{\pv{s}{d}_l} \pv{\widehat{\mathbf{v}}}{d}_l (\pv{\widehat{\vu}}{d}_l)^\top \right) \pv{\vr}{d}.\footnote{Our end-to-end learning results rely on the results of~\citet{agarwal2023adaptive}, which provide guarantees for \emph{regularized} principal component regression. However, this difference is not important for our purposes. We point the interested reader to~\citet[Section 3]{agarwal2023adaptive} for further details on regularized PCR.}
\end{equation*}
\section{Characterizing Strategyproof Intervention Policies}\label{sec:characterizing}
We now turn our attention to the problem of \emph{characterizing} strategyproof intervention policies. 
In other words, we are interested in (1) deriving conditions under which strategyproof intervention policies do or do not exist, and (2) characterizing the form of strategyproof intervention policies whenever they do exist. 
In~\Cref{sec:learning}, we focus on the problem of \emph{learning} strategyproof intervention policies from historical data.

We begin by deriving a necessary and sufficient condition for a strategyproof intervention policy to exist. 
In order to characterize this condition, we need to first introduce the notion of a \emph{best-response ball}. 
\begin{definition}[Best-Response Ball]\label{def:ball}
The best-response ball of a set of units $\cU$ is the set of all pre-intervention outcomes $\Tilde{\cY}_{pre}(\cU)$ such that 
\begin{equation*}
\Tilde{\vy}_{pre} \in \Tilde{\cY}_{pre}(\cU) \; \text{ if } \; \left\| \Tilde{\vy}_{pre} - \vy_{i, pre} \right\|_2 \leq \delta \text{ for any unit } i \in \cU,
\end{equation*}
where $\vy_{i,pre} \in \cY_{pre}$ denotes the unmodified pre-intervention outcomes associated with unit $i$.
\end{definition}
The best-response ball for an \emph{individual unit} is its set of feasible modifications according to~\Cref{ass:mod}.
The best-response ball for a \emph{set of units} is the union of the best-response balls of all units contained within the set. 
%
%
Equipped with this definition, we are now ready to introduce our sufficient and necessary condition, which we call \emph{separation of types}.\looseness-1

\begin{condition}[Separation of Types]\label{cond:nec-suf}
For a given problem instance, let $\pv{\cU}{d}$ denote the set of all units of type $d$ 
(recall~\Cref{def:type}).
Separation of types is satisfied if\looseness-1
\begin{equation*}
\forall d \in \range{k}_0, \; \not \exists \; i \in \pv{\cU}{d} \text{ s.t. } \Tilde{\cY}_{pre}(i) \subseteq \bigcup_{d' = 0}^{d-1} \Tilde{\cY}_{pre}(\pv{\cU}{d'}).
\end{equation*}
%
%
\end{condition}
In other words, separation of types is satisfied if for all interventions $d \in \range{k}_0$, there does not exist any unit $i$ of type $d$ whose best-response ball $\Tilde{\cY}_{pre}(i)$ is a complete subset of the union of best-response balls of units with types less than $d$. 
\begin{theorem}\label{thm:impossible}
Separation of types (\Cref{cond:nec-suf}) is both necessary and sufficient for a strategyproof intervention policy (as defined in~\Cref{def:spip}) to exist.
\end{theorem}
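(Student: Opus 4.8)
The plan is to reduce strategyproofness to a pair of pointwise constraints on $\pi$, and then exhibit a single canonical policy that is pointwise-maximal among all candidates satisfying the first constraint; both directions of the theorem then fall out of this one construction. First I would unpack what strategyproofness means for a unit $i$ of type $d$. By~\Cref{ass:mod} the unit's best response receives the intervention $\max_{\hat{\vy} \in \Tilde{\cY}_{pre}(i)} \pi(\hat{\vy})$ (the highest-preference label reachable within its $\delta$-ball), and since $\pi$ takes values in the finite set $\range{k}_0$ this maximum is always attained. Hence, by~\Cref{def:spip}, $\pi$ is strategyproof if and only if for every unit $i$ of type $d$ the following two conditions hold: \textbf{(a)} $\pi(\hat{\vy}) \le d$ for all $\hat{\vy} \in \Tilde{\cY}_{pre}(i)$ (no unit can overshoot its type), and \textbf{(b)} $\pi(\hat{\vy}) = d$ for some $\hat{\vy} \in \Tilde{\cY}_{pre}(i)$ (the type is actually achievable).

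Next I would introduce the canonical policy that saturates all the upper-bound constraints in (a). For a point $\hat{\vy}$ lying in at least one best-response ball, define
\begin{equation*}
\pi^\star(\hat{\vy}) := \min\bigl\{ d \in \range{k}_0 : \hat{\vy} \in \Tilde{\cY}_{pre}(\pv{\cU}{d}) \bigr\},
\end{equation*}
and assign $\pi^\star(\hat{\vy}) = 0$ to any point lying in no ball (such points are unreachable by any unit, so their assignment is immaterial). Two observations drive the whole argument. First, $\pi^\star$ satisfies (a) for every unit by construction: if $i$ has type $d$ and $\hat{\vy} \in \Tilde{\cY}_{pre}(i) \subseteq \Tilde{\cY}_{pre}(\pv{\cU}{d})$, then the minimizing type is at most $d$, so $\pi^\star(\hat{\vy}) \le d$. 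Second, $\pi^\star$ is pointwise-largest among all policies obeying the (a) constraints: any strategyproof $\pi$ must satisfy $\pi(\hat{\vy}) \le d$ whenever $\hat{\vy}$ lies in a type-$d$ ball, hence $\pi(\hat{\vy}) \le \pi^\star(\hat{\vy})$ at every reachable point.

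For sufficiency I would show that~\Cref{cond:nec-suf} is exactly condition (b) for $\pi^\star$. Unpacking $\pi^\star$, we have $\pi^\star(\hat{\vy}) = d$ iff $\hat{\vy}$ lies in the ball of some type-$d$ unit but in no ball of a lower type. Thus (b) holds for a unit $i$ of type $d$ iff $\Tilde{\cY}_{pre}(i)$ contains a point outside $\bigcup_{d'=0}^{d-1}\Tilde{\cY}_{pre}(\pv{\cU}{d'})$, i.e. iff $\Tilde{\cY}_{pre}(i) \not\subseteq \bigcup_{d'=0}^{d-1}\Tilde{\cY}_{pre}(\pv{\cU}{d'})$ --- precisely the statement of~\Cref{cond:nec-suf}. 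So when separation of types holds, $\pi^\star$ satisfies both (a) and (b) for all units and is therefore strategyproof. For necessity I would argue contrapositively using the maximality of $\pi^\star$: if the condition fails, there is a unit $i$ of type $d$ whose ball is covered by lower-type balls, so $\pi^\star$ violates (b) at $i$; since every (a)-feasible policy $\pi$ satisfies $\pi \le \pi^\star \le d$ on $\Tilde{\cY}_{pre}(i)$, such a $\pi$ cannot attain the value $d$ there either, so no policy satisfies (a) and (b) simultaneously and no strategyproof policy exists.

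I expect the main obstacle to be the bookkeeping around condition (b) rather than any deep difficulty: one must verify carefully that the integer-valued maximum in the best response is attained (so that "achievability" is an honest pointwise statement), that $\pi^\star$ is a well-defined deterministic map (the minimum is over a nonempty finite set of types), and above all that the non-containment in~\Cref{cond:nec-suf} matches (b) for $\pi^\star$ on the nose. The conceptual crux is recognizing that a single policy $\pi^\star$ --- the pointwise-maximal one consistent with the no-overshoot constraints --- simultaneously certifies existence when the condition holds and witnesses impossibility when it fails, which lets me avoid constructing separate policies for the two directions.
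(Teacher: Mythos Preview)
Your proposal is correct and follows essentially the same approach as the paper: both directions hinge on the very same canonical policy $\pi^\star(\hat{\vy}) = \min\{d : \hat{\vy} \in \Tilde{\cY}_{pre}(\pv{\cU}{d})\}$, with sufficiency established by showing $\pi^\star$ works under~\Cref{cond:nec-suf} and necessity by a covering argument. Your explicit (a)/(b) decomposition and the pointwise-maximality observation for $\pi^\star$ give a slightly cleaner unification of the two directions than the paper's separate lemmas, but the underlying content is identical.
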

\khedit{As the effort budget $\delta$ grows, both $\Tilde{\cY}_{pre}(i)$ and $\bigcup_{d' = 0}^{d-1} \Tilde{\cY}_{pre}(\pv{\cU}{d'})$ become larger, which makes~\Cref{cond:nec-suf} harder to satisfy.}
Necessity follows from leveraging~\Cref{def:ball} to show that if~\Cref{cond:nec-suf} does \emph{not} hold, there will always be at least one unit who can strategize to receive a better intervention.
We show sufficiency by giving a strategyproof intervention policy when~\Cref{cond:nec-suf} holds. 
The following two lemmas cover the necessity and sufficiency cases and immediately imply~\Cref{thm:impossible}. 
\begin{lemma}[Necessity]\label{lem:nec-app}
Suppose separation of types (\Cref{cond:nec-suf}) does not hold. 
Then there exists no mapping $\pi: \mathcal{Y}_{pre} \rightarrow \range{k}_0$ which can intervene perfectly on all unit types.
\end{lemma}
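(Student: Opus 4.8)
The plan is to prove the contrapositive directly: assuming a strategyproof policy $\pi$ exists, I will derive a contradiction at the unit that witnesses the failure of \Cref{cond:nec-suf}. By the negation of \Cref{cond:nec-suf}, fix $d \in \range{k}_0$ and $i \in \pv{\cU}{d}$ with $\Tilde{\cY}_{pre}(i) \subseteq \bigcup_{d'=0}^{d-1} \Tilde{\cY}_{pre}(\pv{\cU}{d'})$.

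The first step is to reformulate best-response (\Cref{ass:mod}) in terms of the policy's values on a ball. Since unit $j$ selects $\Tilde{\vy}_{j,pre} \in \arg\max_{\widehat{\vy} \in \Tilde{\cY}_{pre}(j)} \pi(\widehat{\vy})$ and interventions are ordered in increasing preference, the intervention $j$ actually obtains equals $\max_{\widehat{\vy} \in \Tilde{\cY}_{pre}(j)} \pi(\widehat{\vy})$. Consequently, $\pi$ being strategyproof for a unit $j$ of type $d'$ is equivalent to the conjunction of two conditions: (i) $\pi(\widehat{\vy}) = d'$ for at least one $\widehat{\vy} \in \Tilde{\cY}_{pre}(j)$, so that $d'$ is attainable; and (ii) $\pi(\widehat{\vy}) \le d'$ for every $\widehat{\vy} \in \Tilde{\cY}_{pre}(j)$, so that nothing more preferred than $d'$ is attainable. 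I would record this equivalence as a short preliminary claim, since condition (ii) drives the rest of the argument.

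The second step applies condition (ii) to every lower-type unit. For each $d' < d$ and each $j \in \pv{\cU}{d'}$, strategyproofness forces $\pi(\widehat{\vy}) \le d' < d$ for all $\widehat{\vy} \in \Tilde{\cY}_{pre}(j)$; in particular $\pi$ never outputs the value $d$ on $\Tilde{\cY}_{pre}(j)$. Taking the union over all $d' < d$ and all such $j$, the policy $\pi$ never outputs $d$ anywhere on $\bigcup_{d'=0}^{d-1} \Tilde{\cY}_{pre}(\pv{\cU}{d'})$. By the covering hypothesis $\Tilde{\cY}_{pre}(i) \subseteq \bigcup_{d'=0}^{d-1} \Tilde{\cY}_{pre}(\pv{\cU}{d'})$, it follows that $\pi(\widehat{\vy}) \neq d$ for every $\widehat{\vy} \in \Tilde{\cY}_{pre}(i)$, i.e., $\max_{\widehat{\vy} \in \Tilde{\cY}_{pre}(i)} \pi(\widehat{\vy}) \neq d$.

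The final step is the contradiction: unit $i$ has type $d$, so $\Tilde{\vy}_{i,pre}$ attains $\max_{\widehat{\vy} \in \Tilde{\cY}_{pre}(i)} \pi(\widehat{\vy})$, whence $\pi(\Tilde{\vy}_{i,pre}) \neq d = \arg\max_{d'} \pv{r}{d'}_i$, violating \Cref{def:spip}. Hence no strategyproof $\pi$ can exist, proving \Cref{lem:nec-app}. The one place requiring care---and the step I expect to be the main obstacle to state cleanly---is the best-response reformulation: I must use that interventions are totally ordered by preference so that ``maximizing $\pi$ over the ball'' is the obtained intervention, and I must be careful that condition (ii) for a lower-type unit $j$ forbids the exact value $d$ (since $d > d'$), rather than merely bounding the obtained intervention in a way that could still leave $d$ reachable inside $\Tilde{\cY}_{pre}(i)$. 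Everything else reduces to set containment together with the ordering $d' < d$.
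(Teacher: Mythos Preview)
Your proof is correct and follows essentially the same approach as the paper: both arguments fix a witness unit $i$ of type $d$ whose ball is covered by lower-type balls, observe that any point in $\Tilde{\cY}_{pre}(i)$ is also reachable by some unit of strictly lower type, and conclude that no policy can assign $d$ to unit $i$ without misclassifying one of those lower-type units. Your write-up simply unpacks the reasoning more explicitly by decomposing strategyproofness into the attainability condition (i) and the upper-bound condition (ii), whereas the paper compresses this into the single remark that ``any valid modified pre-treatment behavior of unit $i$ can also be obtained by some other unit $i' \in \bigcup_{d'=0}^{d-1} \pv{\cU}{d'}$.''
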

Intuitively, separation of types is \emph{necessary} because if it does not hold, then there always exists at least one unit of a lower type that can always ``pretend'' to be of higher type, thus leading the principal to intervene incorrectly on some subset of the population.
\begin{proof}
Assume that separation of types (\Cref{cond:nec-suf}) is violated for some unit $i$ of type $d$. 
Since $\cY_{pre}(i) \subseteq \bigcup_{d' = 0}^{d-1} \cY_{pre}(\pv{\cU}{d'})$, any valid modified pre-treatment behavior of unit $i$ can also be obtained by some other unit $i' \in \bigcup_{d' = 0}^{d-1} \pv{\cU}{d'}$ by~\Cref{def:ball}.
Therefore, no policy which assigns interventions according to a unit's observed pre-treatment outcomes can perfectly intervene on both $\pv{\cU}{d}$ and $\{\pv{\cU}{d'}\}_{d'=1}^{d-1}$.
\end{proof}

Next we show that separation of types is sufficient for a strategyproof intervention policy to exist, by providing a strategyproof intervention policy whenever separation of types holds. 
Recall that strategyproofness is defined with respect to whether the intervention assigned to a unit matches its type and \emph{not} with respect to whether modification of the pre-treatment outcomes takes place. 

\begin{lemma}[Sufficiency]\label{lem:suf-app}
Suppose separation of types (\Cref{cond:nec-suf}) holds. Then the following intervention policy is strategyproof:\looseness-1

Assign intervention $d_i$ to unit $i$, where 
\begin{equation}\label{eq:opt-ip-app}
d_i = \min\{d \in \range{k}_0 \; : \; \Tilde{\vy}_{i,pre} \in \cY_{pre}(\pv{\cU}{d}) \}
\end{equation}
\end{lemma}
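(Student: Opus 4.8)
The plan is to fix an arbitrary unit $i$ of true type $d^*$ (so that $d^* \in \arg\max_{d \in \range{k}_0} r_i^{(d)}$) and show that its best response $\Tilde{\vy}_{i,pre}$ to the proposed policy $\pi$ satisfies $\pi(\Tilde{\vy}_{i,pre}) = d^*$; since $d^*$ is arbitrary this is exactly Definition~\ref{def:spip}. Throughout I identify the notation $\cY_{pre}(\pv{\cU}{d})$ of the policy with the best-response ball $\Tilde{\cY}_{pre}(\pv{\cU}{d})$ of Definition~\ref{def:ball}. Because interventions are ordered by increasing preference (Definition~\ref{ass:mod}), I split the argument into an \emph{upper bound} (unit $i$ cannot be assigned any $d > d^*$) and an \emph{attainability} claim (unit $i$ can secure $d^*$), and then combine them using the fact that a best response maximizes the preference of the assigned intervention. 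I first note the policy is well defined on the inputs that matter: the unmodified outcome satisfies $\|\vy_{i,pre} - \vy_{i,pre}\|_2 = 0 \le \delta$, so $\vy_{i,pre} \in \cY_{pre}(\pv{\cU}{d^*})$ and the set inside the $\min$ is nonempty.

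For the upper bound I argue directly from the ``take the minimum'' structure of $\pi$, a step that does not even invoke separation of types. Suppose toward contradiction that some feasible modification $\widehat{\vy} \in \Tilde{\cY}_{pre}(i)$ receives $\pi(\widehat{\vy}) = d > d^*$. By the definition of $\pi$ this forces $\widehat{\vy} \notin \cY_{pre}(\pv{\cU}{d'})$ for every $d' < d$, in particular $\widehat{\vy} \notin \cY_{pre}(\pv{\cU}{d^*})$. But $\widehat{\vy}$ is reachable from $i \in \pv{\cU}{d^*}$, so by Definition~\ref{def:ball} we have $\widehat{\vy} \in \Tilde{\cY}_{pre}(\{i\}) \subseteq \Tilde{\cY}_{pre}(\pv{\cU}{d^*}) = \cY_{pre}(\pv{\cU}{d^*})$, a contradiction. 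Hence no reachable point is assigned an intervention strictly more preferred than $d^*$.

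For attainability --- the step where Condition~\ref{cond:nec-suf} does the real work --- I invoke separation of types at level $d^*$. Since the condition holds, unit $i$ does \emph{not} satisfy $\Tilde{\cY}_{pre}(i) \subseteq \bigcup_{d' < d^*} \Tilde{\cY}_{pre}(\pv{\cU}{d'})$, so there exists a point $\vy^{*} \in \Tilde{\cY}_{pre}(i)$ with $\vy^{*} \notin \bigcup_{d' < d^*} \cY_{pre}(\pv{\cU}{d'})$. Then $\min\{d : \vy^{*} \in \cY_{pre}(\pv{\cU}{d})\} \ge d^*$, while $\vy^{*} \in \Tilde{\cY}_{pre}(i) \subseteq \cY_{pre}(\pv{\cU}{d^*})$ forces the same minimum to be $\le d^*$; together these give $\pi(\vy^{*}) = d^*$. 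Thus $\vy^{*}$ is a feasible modification of unit $i$ yielding intervention $d^*$.

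Finally I combine the two claims. The upper bound shows $d^*$ is the highest preference level attainable by any feasible modification of $i$, and attainability exhibits a modification achieving it, so every best response $\Tilde{\vy}_{i,pre}$ (including after the smallest-effort tie-break of Definition~\ref{ass:mod}) satisfies $\pi(\Tilde{\vy}_{i,pre}) = d^*$, establishing strategyproofness. I expect the attainability claim to be the main obstacle: it is the only place that genuinely uses separation of types, and the subtlety is converting the non-containment of the best-response ball into a concrete modification $\vy^{*}$ whose minimal-index membership equals $d^*$ precisely, which must be handled carefully when several type classes $\pv{\cU}{d'}$ have overlapping best-response balls. The upper bound, by contrast, is essentially immediate from the $\min$ in the definition of $\pi$.
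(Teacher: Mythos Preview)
Your proof is correct and follows essentially the same approach as the paper: both establish an upper bound (no feasible modification of a type-$d^*$ unit can be assigned any $d > d^*$, because any such point already lies in $\cY_{pre}(\pv{\cU}{d^*})$) and then use separation of types to exhibit a feasible modification that attains $d^*$. Your version is more explicit---you verify well-definedness of the $\min$, spell out the contradiction for the upper bound, and note that the best-response tie-break does not interfere---but the underlying argument is identical.
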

\begin{proof}
No unit of type $d' < d$ can receive intervention $d$ by construction, since their pre-treatment outcomes will be in $\cY_{pre}(\pv{\cU}{d'})$ by definition. 
Therefore, it suffices to show that any unit of type $d$ can receive intervention $d$.

Consider a unit $i$ of type $d$. Since~\Cref{cond:nec-suf} holds, we know that there exists a vector of pre-treatment outcomes $\Tilde{\vy}_{pre} \in \cY_{pre}(i)$ such that
$\Tilde{\vy}_{pre} \not \in \bigcup_{d' = 0}^{d-1} \cY_{pre}(\pv{\cU}{d'})$. \looseness-1
Since $\cY_{pre}(i) \subseteq \cY_{pre}(\pv{\cU}{d})$, unit $i$ can receive intervention $d$ by strategically modifying their pre-treatment outcomes to $\Tilde{\vy}_{pre}$. 
\end{proof}
We will revisit the computational complexity of evaluating this policy later in the section. 
%
%
In the remainder of the section, we aim to shed some light on the significance of~\Cref{thm:impossible} by describing situations in which it does/does not hold in our panel data setting of interest. 
We begin by showing that~\Cref{cond:nec-suf} always holds in the important special case when there is only a single treatment and control.
\begin{theorem}\label{cor:two}
If $d \in \{0, 1\}$, separation of types (\Cref{cond:nec-suf}) always holds under~\Cref{ass:lfm} and~\Cref{ass:span}. 
Moreover, the following closed-form intervention policy is strategyproof:

Assign intervention $d_i$ to unit $i$, where
\begin{equation}\label{eq:two}
\begin{aligned}
    d_i &= 
    \begin{cases}
        1 \;\; \text{ if } \;\; \langle \pv{\vbeta}{1} - \pv{\vbeta}{0}, \Tilde{\vy}_{i,pre} \rangle - \delta \|
        \pv{\vbeta}{1} - \pv{\vbeta}{0} \|_2 > 0\\
        0 \;\; \text{ otherwise}
    \end{cases}
\end{aligned}
\end{equation}
We call $\langle \pv{\vbeta}{1} - \pv{\vbeta}{0}, \Tilde{\vy}_{i,pre} \rangle - \delta \| \pv{\vbeta}{1} - \pv{\vbeta}{0} \|_2 = 0$ the \emph{decision boundary} between interventions $0$ and $1$.\looseness-1
\end{theorem}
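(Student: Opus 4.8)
The plan is to prove \Cref{cor:two} in two parts: first establish that \Cref{cond:nec-suf} always holds when $k=2$, and then verify that the explicit policy in~\eqref{eq:two} is strategyproof by appealing to \Cref{lem:suf-app}. The key structural fact I would exploit is \Cref{prop:reward}: under \Cref{ass:lfm} and \Cref{ass:span}, the reward $r_i^{(d)} = \langle \pv{\vbeta}{d}, \vy_{i,pre} \rangle$ is \emph{linear} in the unmodified pre-intervention outcomes. Hence a unit $i$ is of type $1$ exactly when $\langle \pv{\vbeta}{1} - \pv{\vbeta}{0}, \vy_{i,pre} \rangle \geq 0$ and of type $0$ otherwise, so the two type-sets $\pv{\cU}{0}$ and $\pv{\cU}{1}$ are separated by the hyperplane with normal $\vw := \pv{\vbeta}{1} - \pv{\vbeta}{0}$. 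This linear separability is what makes the two-intervention case tractable, and it is the analogue of the binary strategic classification setting.

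\textbf{Verifying separation of types.} Since interventions are ordered by increasing preference (\Cref{ass:mod}), \Cref{cond:nec-suf} for $k=2$ reduces to a single nontrivial requirement at $d=1$: there must exist no unit $i \in \pv{\cU}{1}$ whose best-response ball $\Tilde{\cY}_{pre}(i)$ is contained in $\Tilde{\cY}_{pre}(\pv{\cU}{0})$. I would argue this by a direct geometric witness. Take any unit $i$ of type $1$, so $\langle \vw, \vy_{i,pre}\rangle \geq 0$. Consider the point obtained by pushing $\vy_{i,pre}$ in the $+\vw$ direction by the full budget, namely $\Tilde{\vy}_{pre} = \vy_{i,pre} + \delta \,\vw / \|\vw\|_2$, which lies in $\Tilde{\cY}_{pre}(i)$. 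For this point,
\begin{equation*}
\langle \vw, \Tilde{\vy}_{pre}\rangle = \langle \vw, \vy_{i,pre}\rangle + \delta\|\vw\|_2 \geq \delta\|\vw\|_2 > 0.
\end{equation*}
I then need to show $\Tilde{\vy}_{pre}$ cannot be produced by any type-$0$ unit, i.e. $\Tilde{\vy}_{pre} \notin \Tilde{\cY}_{pre}(\pv{\cU}{0})$. If some $i' \in \pv{\cU}{0}$ had $\|\Tilde{\vy}_{pre} - \vy_{i',pre}\|_2 \leq \delta$, then by Cauchy--Schwarz $\langle \vw, \Tilde{\vy}_{pre} - \vy_{i',pre}\rangle \leq \delta\|\vw\|_2$, whence $\langle \vw, \vy_{i',pre}\rangle \geq \langle \vw, \Tilde{\vy}_{pre}\rangle - \delta\|\vw\|_2 \geq 0$, forcing $i'$ to be (weakly) of type $1$ and contradicting $i' \in \pv{\cU}{0}$ (modulo the tie-breaking/strictness at the boundary, which I would handle carefully). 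Thus $\Tilde{\cY}_{pre}(i) \not\subseteq \Tilde{\cY}_{pre}(\pv{\cU}{0})$, establishing \Cref{cond:nec-suf}.

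\textbf{Verifying the closed form.} It remains to check that the policy~\eqref{eq:two} coincides with the abstract strategyproof policy of \Cref{lem:suf-app} and hence is strategyproof. The rule assigns $d_i = 1$ precisely when $\langle \vw, \Tilde{\vy}_{i,pre}\rangle > \delta\|\vw\|_2$. I would show that $\Tilde{\vy}_{pre} \in \Tilde{\cY}_{pre}(\pv{\cU}{1})$ iff there exists a type-$1$ unit within distance $\delta$, which (using that the type-$1$ units are exactly those on the nonnegative side of the hyperplane) holds iff $\langle \vw, \Tilde{\vy}_{pre}\rangle > \delta\|\vw\|_2$; the threshold $\delta\|\vw\|_2$ is exactly the shift needed so that a reported point is reachable from the $\{\langle\vw,\cdot\rangle \geq 0\}$ halfspace. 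Combining with the best-response behavior of \Cref{ass:mod}, a true type-$1$ unit can always cross this shifted boundary (its max attainable score is $\langle\vw,\vy_{i,pre}\rangle + \delta\|\vw\|_2 \geq \delta\|\vw\|_2$, and the smallest-effort tie-break ensures it does so), while a true type-$0$ unit cannot (its max attainable score is $\langle\vw,\vy_{i,pre}\rangle + \delta\|\vw\|_2 < \delta\|\vw\|_2$).

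\textbf{Main obstacle.} I expect the delicate part to be the boundary/tie-breaking bookkeeping: the type definition (\Cref{def:type}) uses a weak $\arg\max$, so units exactly on the hyperplane $\langle\vw,\vy_{i,pre}\rangle = 0$ are of both types, and the strict-versus-weak inequalities in~\eqref{eq:two} and in the best-response must be reconciled (this is where the ``smallest effort investment'' convention in \Cref{ass:mod} does real work, guaranteeing that a genuine type-$1$ unit actually reports across the boundary rather than staying put). Everything else is a routine Cauchy--Schwarz argument exploiting that the optimal manipulation direction for a linear objective is along $\vw/\|\vw\|_2$.
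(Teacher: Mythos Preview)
Your proposal is correct and follows essentially the same approach as the paper: both use \Cref{prop:reward} to see that $\pv{\cU}{0}$ and $\pv{\cU}{1}$ are separated by the hyperplane $\langle\vw,\cdot\rangle=0$ with $\vw=\pv{\vbeta}{1}-\pv{\vbeta}{0}$, then verify directly that the $\delta$-shifted boundary in~\eqref{eq:two} is strategyproof because any type-$1$ unit can cross it via the modification $\vy_{i,pre}+\delta\,\vw/\|\vw\|_2$ while no type-$0$ unit can (your Cauchy--Schwarz step makes this explicit where the paper just asserts it). One small note: the paper does not route through \Cref{lem:suf-app}, and your claimed equivalence between~\eqref{eq:two} and the abstract policy there is not literally correct (the set $\Tilde{\cY}_{pre}(\pv{\cU}{0})$ depends on which units actually exist, not on the full halfspace), but this is harmless since your direct best-response argument is what does the work; you are also more careful than the paper about the boundary/tie-breaking edge case.
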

\begin{proof}
If $d \in \{0, 1\}$,we can simplify~\Cref{cond:nec-suf} to 
\begin{equation*}
\not \exists \; i \in \pv{\cU}{1} \text{ s.t. } \cY_{pre}(i) \subseteq \cY_{pre}(\pv{\cU}{0}).
\end{equation*}
By the reward reformulation (\Cref{prop:reward}), $\pv{\cU}{0}$ and $\pv{\cU}{1}$ are separated by a single hyperplane: $\langle \pv{\vbeta}{1} - \pv{\vbeta}{0}, \Tilde{\vy}_{pre} \rangle = 0$. 
Therefore, by the definition of best-response ball (\Cref{def:ball}), this simplified version of separation of types must always hold, and a strategyproof intervention policy may be obtained by shifting the hyperplane $\langle \pv{\vbeta}{1} - \pv{\vbeta}{0}, \Tilde{\vy}_{pre} \rangle = 0$ by $\delta$ (the unit effort budget) in the direction of $\pv{\vbeta}{0} - \pv{\vbeta}{1}$. Note that such an intervention policy is strategyproof since there exists at least one valid modification of pre-treatment outcomes for all units of type $1$ to receive treatment (namely, $\Tilde{\vy}_{pre} = \vy_{pre} + \delta \cdot (\pv{\vbeta}{1} - \pv{\vbeta}{0}) / \|\pv{\vbeta}{1} - \pv{\vbeta}{0}\|_2$), and there exists no valid modification of pre-treatment outcomes for any unit of type $0$ to receive treatment (due to~\Cref{ass:mod}).
\end{proof}

%
\begin{figure}[t]
\centering
\includegraphics[width=\textwidth]{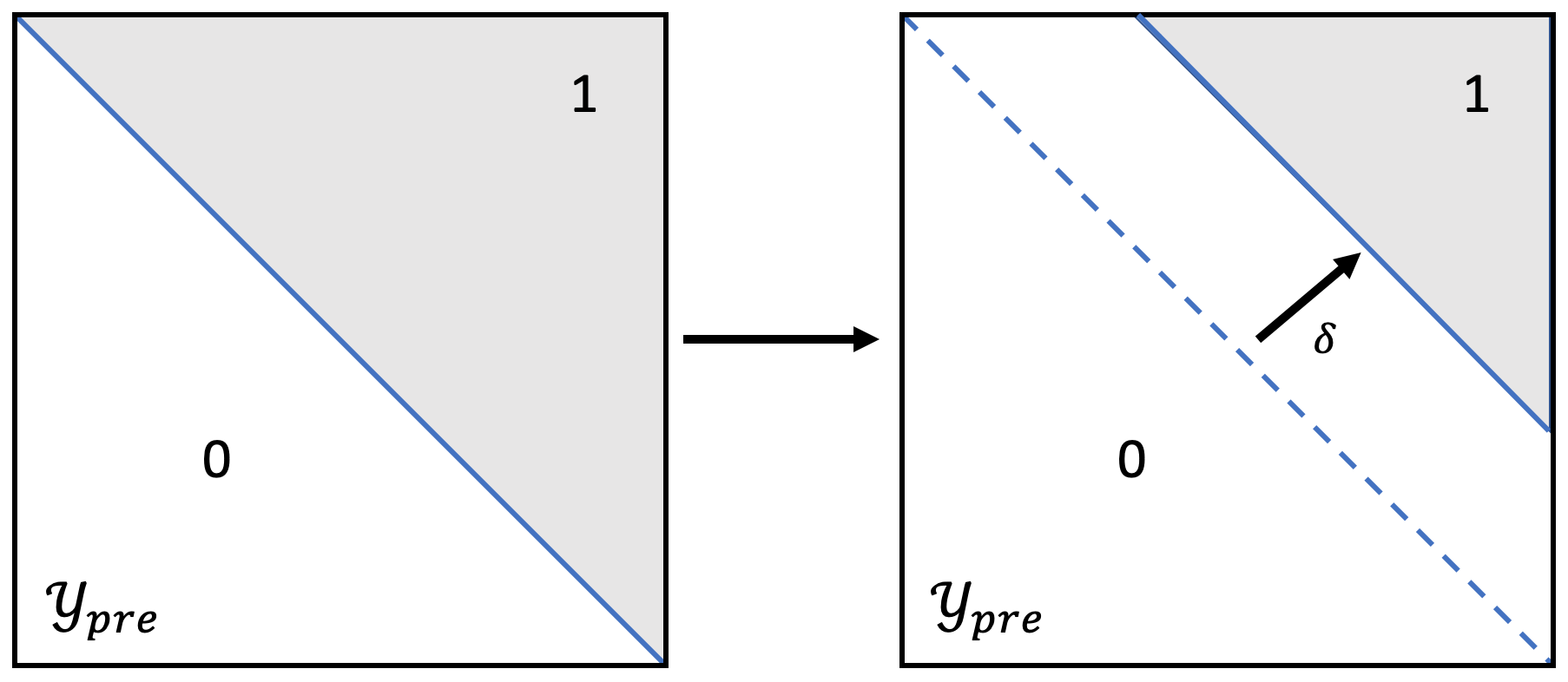}
\caption{Visualization of the space of pre-intervention outcomes for a space with two interventions. 
In this example, we assume that units prefer intervention $1$ to intervention $0$. 
Left (\Cref{prop:reward}): The optimal policy in the non-strategic setting assigns the control (intervention $0$) to the units left of the line $\langle \pv{\vbeta}{1} - \pv{\vbeta}{0}, \vy_{pre} \rangle = 0$ (in blue) and intervention $1$ to the units to the right. 
Right (\Cref{cor:two}): When units are strategic, the optimal decision boundary is shifted by $\delta$ in the direction of $\pv{\vbeta}{0} - \pv{\vbeta}{1}$.\looseness-1}
\label{fig:two-intervention-visual}
\end{figure}
See~\Cref{fig:two-intervention-visual} for an illustration of a strategyproof intervention policy in the binary intervention setting. 
Intuitively, the idea of~\Cref{cor:two} is to shift the true decision boundary in such a way to account for potential manipulations.
While this shift prevents units from ``gaming'' the policy to receive the intervention when it is not in the principal's best interest, it may require some units who should receive the intervention to strategize in order to do so.
Perhaps somewhat surprisingly, this line of reasoning does not carry over to the setting where there are more than two treatments.
\begin{theorem}\label{ex:impossible}
    There exists an instance with three interventions such that \Cref{cond:nec-suf} is not satisfied. 
\end{theorem}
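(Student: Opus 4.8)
The statement is an existence claim, so the plan is to exhibit one explicit instance in which separation of types (\Cref{cond:nec-suf}) fails for the top type $d=2$; by \Cref{thm:impossible} this is all that is needed. I would build the instance in the pre-intervention outcome space $\mathbb{R}^{T_0}$ with $T_0=2$, working directly with the linear rewards $r_i^{(d)}=\langle \pv{\vbeta}{d},\vy_{i,pre}\rangle$ guaranteed by \Cref{prop:reward}. Since \Cref{ass:lfm} and \Cref{ass:span} only constrain the $\pv{\vbeta}{d}$ up to the freedom of choosing latent factors, I can pick three equal-norm unit vectors $\pv{\vbeta}{0},\pv{\vbeta}{1},\pv{\vbeta}{2}\in\mathbb{R}^2$ freely and realize them afterwards (e.g.\ with $s=2$, $\vu^{(0)}_1=e_1,\vu^{(0)}_2=e_2$, a single post-period step with $\vu^{(d)}_T=\pv{\vbeta}{d}$, and an overall rescaling to respect $|y|\le 1$; the span condition then holds automatically since $\pv{\vbeta}{d}\in\mathbb{R}^2=\mathrm{span}\{e_1,e_2\}$). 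With homogeneous linear rewards of equal norm, the type of a point $\vy_{pre}$ is determined by which $\pv{\vbeta}{d}$ it is most aligned with, so the three type regions are angular sectors (cones) emanating from the origin, and the whole argument reduces to a purely geometric covering question about $\delta$-balls.

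The key geometric input is the contrast between covering by two versus three equal balls. Two balls of radius $\delta$ can never contain a third ball of radius $\delta$ with a distinct center (a boundary point in the direction perpendicular to the line of centers escapes), which is exactly why the two-intervention case of \Cref{cor:two} is always fine. However, it is a classical fact that a disk of radius $\delta$ can be covered by three disks of radius $\delta$ (indeed radius $\tfrac{\sqrt3}{2}\delta$ already suffices, with centers at distance $\tfrac{\delta}{2}$ placed $120^\circ$ apart), so there is slack to spare. I would exploit exactly this slack.

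Concretely, I would take $\pv{\vbeta}{2}$ pointing along the positive $x$-axis and $\pv{\vbeta}{0},\pv{\vbeta}{1}$ at small angles $\pm 2\alpha$ to it, so that the type-$2$ region is a thin wedge of half-angle $\alpha$ about the axis while types $0$ and $1$ occupy the two remaining half-planes. Place a single type-$2$ unit $C=(\rho,0)$ strictly inside the wedge, together with three covering units $p_j=C+\tfrac{\delta}{2}(\cos\phi_j,\sin\phi_j)$ for $\phi_j\in\{30^\circ,150^\circ,270^\circ\}$. These offsets keep every $p_j$ at least $30^\circ$ off the axis, so that for $\alpha$ small enough (and $\rho$ a moderate multiple of $\delta$) each $p_j$ lands strictly outside the thin wedge, i.e.\ is genuinely of type $0$ or type $1$; here $p_1,p_2$ fall above the axis (type $0$) and $p_3$ below (type $1$), so both lower types are represented. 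Taking the instance to consist of exactly these four units gives $\pv{\cU}{0}=\{p_1,p_2\}$, $\pv{\cU}{1}=\{p_3\}$, $\pv{\cU}{2}=\{C\}$. By the covering fact, $\Tilde{\cY}_{pre}(C)$, the $\delta$-ball around $C$, is contained in $\bigcup_j\{\vy:\|\vy-p_j\|_2\le\delta\}=\Tilde{\cY}_{pre}(\pv{\cU}{0})\cup\Tilde{\cY}_{pre}(\pv{\cU}{1})$, which is precisely the failure of \Cref{cond:nec-suf} at $d=2$.

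The main obstacle is the simultaneous bookkeeping of two competing requirements: the three covering centers must be close enough to $C$ (distance $\tfrac{\delta}{2}$) to cover $\Tilde{\cY}_{pre}(C)$, yet must lie strictly across the type boundaries from $C$ so as to be of strictly lower type. I would resolve this tension by making the type-$2$ wedge arbitrarily thin (taking $\alpha\to 0$, i.e.\ $\pv{\vbeta}{0},\pv{\vbeta}{1}$ approaching $\pv{\vbeta}{2}$ in direction): the wedge half-width near the centers scales like $(\rho+\delta)\sin\alpha$, whereas the lateral offset of each $p_j$ from the axis is at least $\tfrac{\delta}{2}\sin 30^\circ=\tfrac{\delta}{4}$, so a single inequality $(\rho+\delta)\sin\alpha<\tfrac{\delta}{4}$ makes the two requirements compatible. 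The remaining checks---that $C$ is uniquely type $2$ (strictly interior to the wedge, not on a boundary) and that the rescaled instance satisfies \Cref{ass:lfm}---are routine.
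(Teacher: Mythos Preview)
Your proposal is correct and gives a genuinely different construction from the paper's. The paper fixes specific $\pv{\vbeta}{d}$'s (not equal norm, with $\pv{\vbeta}{2}-\pv{\vbeta}{0}$ and $\pv{\vbeta}{2}-\pv{\vbeta}{1}$ symmetric about the $y$-axis), takes $\pv{\cU}{0}$ and $\pv{\cU}{1}$ to be entire half-lines sitting a small distance $\alpha$ below the respective decision boundaries---so that their best-response regions are $\delta$-wide strips---and then places a single type-$2$ unit $\pv{\mathbf{v}}{2}=(0,\zeta)$ near the apex where the two boundaries meet; for $\alpha,\zeta$ small enough the $\delta$-ball around $\pv{\mathbf{v}}{2}$ is trapped in the union of the two strips. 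Your approach instead keeps the instance finite (four units), makes the type-$2$ region an arbitrarily thin wedge by taking equal-norm $\pv{\vbeta}{0},\pv{\vbeta}{1}$ nearly parallel to $\pv{\vbeta}{2}$, and invokes the classical three-disk covering fact to contain the type-$2$ unit's $\delta$-ball inside three surrounding lower-type balls. Your version makes the two-versus-three contrast fully transparent (two equal balls can never cover a third of the same radius with a distinct center, while three always can) and yields a minimal four-unit witness that directly verifies the set inclusion in \Cref{cond:nec-suf}; the paper's version has the advantage that the $\pv{\vbeta}{d}$'s are fixed and non-degenerate, and it plugs straight into the ``shift each boundary by $\delta$'' intuition developed after \Cref{cor:two} and depicted in Figure~\ref{fig:bad-example}.
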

\begin{figure}[t]
\centering
\includegraphics[width=0.47\textwidth]{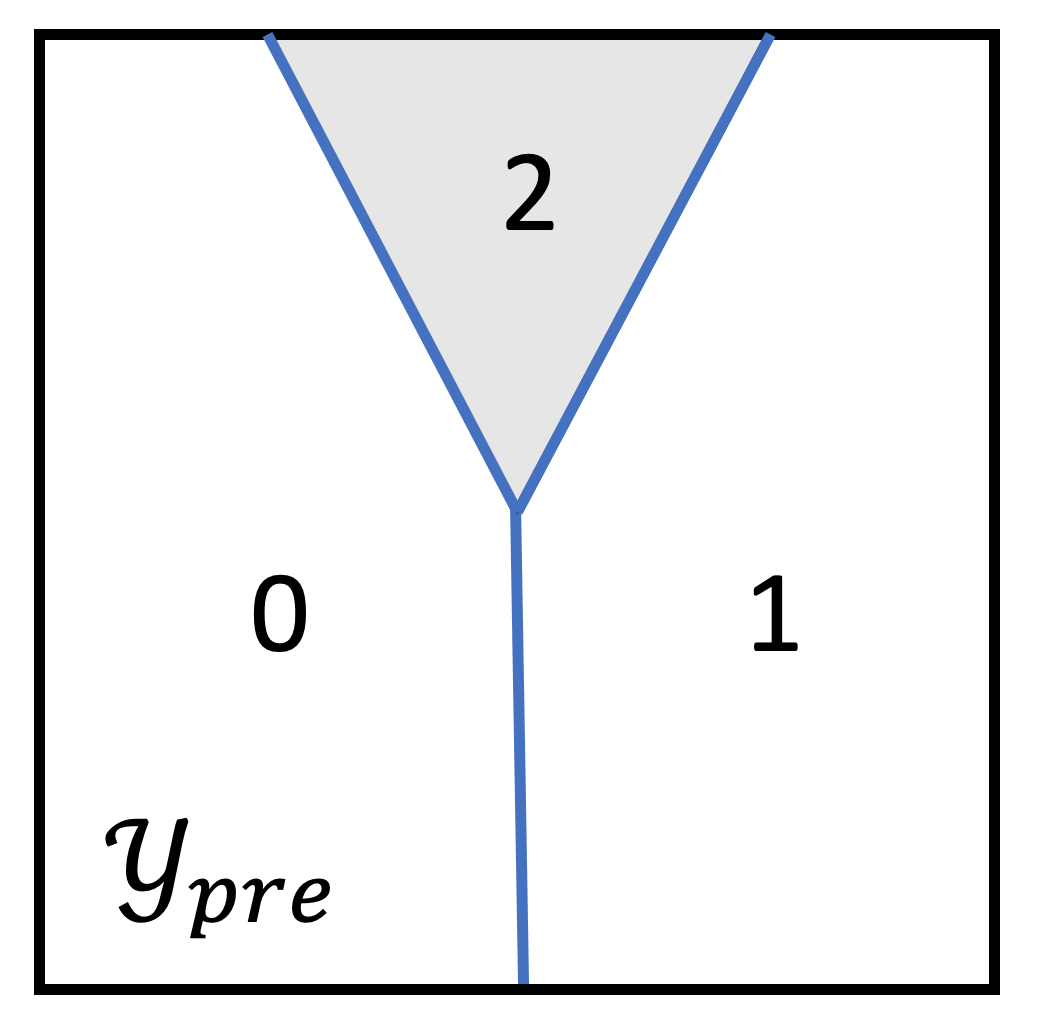}
\caption{Visualization of the space of pre-intervention outcomes for a setting with three interventions for which separation of types (\Cref{cond:nec-suf}) does \emph{not} hold.  
In this example, we assume that units prefer intervention $2$ to interventions $1$ and $0$. 
This implies that correctly intervening on \emph{all} (strategic) units is not possible under this setting.}
\label{fig:bad-example}
\end{figure}
See~\Cref{fig:bad-example} for a visualization of one such setting. 
At a high level,~\Cref{cond:nec-suf} cannot hold in this example since the decision boundaries $\langle \pv{\vbeta}{2} - \pv{\vbeta}{0}, \Tilde{\vy}_{i,pre} \rangle = 0$ and $\langle \pv{\vbeta}{1} - \pv{\vbeta}{0}, \Tilde{\vy}_{i,pre} \rangle = 0$ must both be shifted by $\delta$ in order to prevent some units from strategizing to receive intervention $2$.
However, this prevents other units who should receive intervention $2$ from receiving it, since the amount that they would need to modify their pre-intervention outcomes under these shifts is strictly greater than $\delta$.
We conclude this section by providing a strategyproof intervention policy for an arbitrary number of interventions whenever separation of types is satisfied.

\begin{theorem}\label{thm:three-plus}
    When~\Cref{cond:nec-suf} is satisfied, the following intervention policy is strategyproof and can be evaluated in time polynomial in $T_0$ and $k$ under~\Cref{ass:lfm} and~\Cref{ass:span}:

    Assign intervention $d_i$ to unit $i$, where 
    \begin{equation*}\label{eq:opt-ip}
    d_i = \min\{d \in \range{k}_0 \; : \; \Tilde{\vy}_{i,pre} \in \Tilde{\cY}_{pre}(\pv{\cU}{d}) \}.
    \end{equation*}
\end{theorem}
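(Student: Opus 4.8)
The key observation is that the strategyproofness half of this statement is \emph{already} established by Lemma~\ref{lem:suf-app}: the displayed policy is exactly the one analyzed there (membership in $\Tilde{\cY}_{pre}(\pv{\cU}{d})$ coincides with membership in $\cY_{pre}(\pv{\cU}{d})$, the two notations denoting the same best-response ball). So the only genuinely new content is the computational claim, and my plan is to reduce the evaluation of $d_i = \min\{d \in \range{k}_0 : \Tilde{\vy}_{i,pre} \in \Tilde{\cY}_{pre}(\pv{\cU}{d})\}$ to solving $k$ convex programs.

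First I would invoke the reward reformulation (Lemma~\ref{prop:reward}): under Assumptions~\ref{ass:lfm} and~\ref{ass:span}, $r_i^{(d)} = \langle \pv{\vbeta}{d}, \vy_{i,pre} \rangle$ is linear in the pre-intervention outcomes. Consequently the set of units of type $d$, viewed as a region of $\cY_{pre} \subseteq \mathbb{R}^{T_0}$, is the polyhedron
\[
\pv{\cU}{d} = \{\vy \in \cY_{pre} : \langle \pv{\vbeta}{d} - \pv{\vbeta}{d'}, \vy \rangle \ge 0 \text{ for all } d' \in \range{k}_0\},
\]
an intersection of at most $k-1$ half-spaces together with the box constraints defining $\cY_{pre}$. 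These descriptions can be formed explicitly once the $\pv{\vbeta}{d}$ are known.

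Next I would make precise that testing $\Tilde{\vy}_{i,pre} \in \Tilde{\cY}_{pre}(\pv{\cU}{d})$ is a convex distance computation. By Definition~\ref{def:ball}, $\Tilde{\cY}_{pre}(\pv{\cU}{d})$ is the union of the $\delta$-balls centered at the (infinitely many) units of type $d$, which equals the Minkowski sum of the polyhedron $\pv{\cU}{d}$ with the $\delta$-ball, i.e. the set of points within Euclidean distance $\delta$ of $\pv{\cU}{d}$. Hence $\Tilde{\vy}_{i,pre} \in \Tilde{\cY}_{pre}(\pv{\cU}{d})$ if and only if $\min_{\vy \in \pv{\cU}{d}} \|\Tilde{\vy}_{i,pre} - \vy\|_2 \le \delta$, which is precisely the feasibility of projecting $\Tilde{\vy}_{i,pre}$ onto a polyhedron --- a convex quadratic (equivalently second-order cone) feasibility problem with $T_0$ variables and $O(k)$ constraints, solvable in time polynomial in $T_0$ and $k$ by standard interior-point or ellipsoid methods. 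Iterating this test for $d = 0, 1, \ldots, k-1$ and returning the first $d$ that passes then evaluates the policy using $k$ such solves, hence in time polynomial in $T_0$ and $k$; the minimum is attained because every legitimately observed $\Tilde{\vy}_{i,pre}$ lies in the best-response ball of its originating unit, so at least one test succeeds.

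The step I expect to be the main obstacle is the second one: cleanly arguing that the union of uncountably many $\delta$-balls collapses to a single tractable convex set (the $\delta$-neighborhood of a polyhedron), and that the resulting membership test is a polynomial-size convex program. This is exactly where the latent-factor and linear-span structure is essential --- it is what forces the type regions to be polyhedral. For general (nonlinear) rewards the regions $\pv{\cU}{d}$, and hence their best-response balls, could be arbitrary, and no such polynomial-time evaluation would be available.
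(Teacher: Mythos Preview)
Your proposal is correct and follows essentially the same approach as the paper: invoke Lemma~\ref{lem:suf-app} for strategyproofness, use Lemma~\ref{prop:reward} to express each type region $\pv{\cU}{d}$ as a polyhedron in $\mathbb{R}^{T_0}$, and reduce the membership test $\Tilde{\vy}_{i,pre} \in \Tilde{\cY}_{pre}(\pv{\cU}{d})$ to checking whether the optimal value of the convex QP $\min_{\vy}\{\|\Tilde{\vy}_{i,pre} - \vy\|_2 : \langle \pv{\vbeta}{d} - \pv{\vbeta}{d'}, \vy \rangle \ge 0 \text{ for all } d'\}$ is at most $\delta$, iterating over $d \in \range{k}_0$. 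Your Minkowski-sum framing and the remark on why polyhedrality is essential add useful color, but the argument is the same.
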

\begin{proof}[Proof sketch]
The form of the above intervention policy follows from the proof of sufficiency in~\Cref{thm:impossible}.
We show that membership to the set $\Tilde{\cY}_{pre}(\pv{\cU}{d})$ can be checked by solving a (convex) quadratic program (QP) with polynomial size, which implies that $d_i$ can be computed by solving at most $k$ such QPs.
\end{proof}
\begin{proof}
    Observe that by using the reward reformulation (\Cref{prop:reward}), the definition of a best-response ball (\Cref{def:ball}) may be rewritten as
    \begin{equation*}
    \begin{aligned}
        \Tilde{\vy}_{pre} \in \cY_{pre}(\cU) \; \text{ if } \; &\left\| \Tilde{\vy}_{pre} - \vy_{pre} \right\|_2 \leq \delta\\ \text{ for any } \vy_{pre} \in \cY_{pre} \text{ such that } \; \langle &\pv{\vbeta}{d} - \pv{\vbeta}{d'}, \vy_{pre} \rangle \geq 0 \; \text{ for all } \; d' \in \range{k}_0.
    \end{aligned}
    \end{equation*}
    Therefore, $\Tilde{\vy}_{pre} \in \cY_{pre}(\cU)$ if and only if $\pv{OPT}{d}$ is at most $\delta$, where 
    \begin{equation}\label{eq:optimization-app}
    \begin{aligned}
    \pv{OPT}{d} := \min_{\widehat \vy_{pre}} \;\; &\| \widehat \vy_{pre} - \Tilde{\vy}_{pre} \|_2 \\
    \text{s.t.} \;\; \langle &\pv{\vbeta}{d} - \pv{\vbeta}{d'}, \widehat \vy_{pre} \rangle \geq 0 \; \text{ for all } \; d' \in \range{k}_0.
    \end{aligned}
    \end{equation}
\end{proof}
\subsection{Implications for Strategic Multiclass Classification}\label{sec:sc}
We now highlight an impossibility result for the problem of strategic multiclass classification, which readily follows from~\Cref{ex:impossible} and may be of independent interest. 
For the reader who is uninterested in strategic classification, this subsection may be skipped without any loss in continuity. 

\paragraph{Background on strategic classification}
%
When subjugated to algorithmic decision making, decision subjects (agents) have an incentive to strategically modify their input to the algorithm in order to receive a more desirable prediction. In the context of machine learning models, such settings have been formalized in the literature under the name of \emph{strategic classification} (see, e.g., \cite{hardt2016strategic, dong2018strategic, chen2020learning}). In the strategic (binary) classification setting, the principal commits to an \emph{assessment rule} (usually a linear model), which maps from \emph{observable features} to \emph{binary predictions}. Using knowledge of the assessment rule, strategic agents may modify their observable features in order to maximize their chances of receiving a desirable classification, subject to some constraint on the amount of modification which is possible (e.g., a best-response analogous to our~\Cref{ass:mod}). Given an agent's modified features, the principal uses their assessment rule to make a prediction about the agent. After the prediction is made, the principal receives some feedback about how accurate the assessment rule's prediction was. Under such a setting, the goal of the principal is to deploy an assessment rule with high accuracy on strategic agents.\looseness-1

\paragraph{Impossibility result} 
Using ideas similar to those used in~\Cref{ex:impossible}, we show that an impossibility result holds for the \emph{multiclass} generalization of the strategic (binary) classification setting, where each strategic agent now belongs to one of $k \geq 3$ classes (as opposed to the binary setting, where $k=2$). 
We consider a setting in which a \emph{principal} interacts with $m$ strategic \emph{agents}. 
Each agent $i$ has a set of \emph{initial} observable features $\vy_{i} \in \mathcal{Y}$\footnote{Our notation is different than the standard one adopted in the strategic classification literature in order to best match the notation from the rest of the paper.}. 
These features are privately observable by the agents and they are not revealed to the principal. 
Instead, the agents report features $\Tilde{\vy}_{i} \in \mathcal{Y}$ to the principal, which may be strategically modified. 
Given observed features $\Tilde{\vy}_{i}$, the principal makes a \emph{prediction} $d_{i}$ from some set of possible \emph{classes} $\range{k}_0$. 
We assume that each agent has some true label $d_{i}^*$, and the principal receives reward $1$ if $d_{i} = d_{i}^*$ and reward $0$ otherwise.
In contrast to the principal's reward, we assume that each agent's reward $r^A_{i}(d)$ is a function of the prediction alone, i.e., $r^A_{i}(d) = r^A(d), \forall i \in \range{m}$, and is known to the principal.\footnote{This generalizes the typical assumption made in strategic binary classification, where it is assumed that agents prefer the positive label to the negative one.}

The principal's \emph{assessment rule} $\pi: \mathcal{Y} \rightarrow \range{k}_0$ is a mapping from observable features to predictions. 
In particular, given a set of training data consisting of $\{(\vy_i, d_i)\}_{i=1}^n$ pairs from $n$ non-strategic agents, the goal of the principal is to deploy an assessment rule which minimizes the \emph{out-of-sample} error on $m$ strategic units.

\begin{figure}
    \centering
    \noindent \fbox{\parbox{0.95\textwidth}{
    \vspace{0.1cm}
    \textbf{Protocol: strategic multiclass classification}
    
    \vspace{-0.2cm}
    \noindent\rule{15.1cm}{0.4pt}
    \begin{enumerate}
        \vspace{1mm}
        \item Using historical data collected from $n$ non-strategic agents, the principal learns and publicly commits to an assessment policy $\pi: \mathcal{Y} \rightarrow \range{k}_0$
        \vspace{2mm}
        \item $m$ new agents arrive and strategically modify their observable features from $\vy_{i}$ to $\Tilde{\vy}_{i}$ according to~\Cref{ass:br-sc}, for $i \in \{n+1, \ldots, n+m\}$
        \vspace{2mm}
        \item Principal observes $\Tilde{\vy}_{i}$ and assigns prediction $d_{i} = \pi(\Tilde{\vy}_{i})$ to agent $i$
        \vspace{2mm}
        \item Principal receives reward $\pv{r}{d_{i}}_{i} = \mathbbm{1}\{d_{i} = d_{i}^*\}$
        \vspace{1mm}
    \end{enumerate}
    }}
    \caption{Summary of the strategic multiclass classification setting we consider.}
    \label{fig:summary-sc}
\end{figure}

\begin{definition}[Out-of-sample error]
    The out-of-sample error of a policy $\pi$ is defined as the empirical probability that $\pi$ makes an incorrect prediction on the $m$ test agents. Formally,
    \begin{equation*}
        \frac{1}{m} \sum_{i=n+1}^{n+m} \mathbbm{1}\{d_{i} \neq d_{i}^*\}
    \end{equation*}
\end{definition}

Given a principal policy $\pi$, it is natural for an agent to modify their observable features in a way which maximizes their reward. 
Specifically, we assume that agent $i$ strategically modifies their observable features based on the principal's policy, subject to a constraint on the amount of modification which is possible. 
In addition to being a common assumption in the strategic classification literature, this \emph{budget} constraint on the amount an agent can modify their features reflects the fact that agents have inherent constraints on the amount of time and resources they can spend on modification. 
\begin{assumption}[Agent Best Response]\label{ass:br-sc}
    We assume that agent $i$ best-responds to the principal's policy $\pi$ in order to maximize their expected reward, subject to the constraint that their modified observable features $\Tilde{\vy}_{i}$
    are within an $\ell_2$ ball of radius $\delta$ of their initial observable features $\vy_{i}$. Formally, we assume that agent $i$ solves the following optimization to determine their modified observable features:
    \begin{equation*}
        \begin{aligned}
            \Tilde{\vy}_{i} \in &\arg \max_{\widehat{\vy}_{i} \in \mathcal{Y}} \; r^A(\pi(\widehat{\vy}_{i}))\\
            \text{s.t.} \; &\|\widehat{\vy}_{i} - \vy_{i} \|_2 \leq \delta
        \end{aligned}
    \end{equation*}
\end{assumption}

Furthermore, we assume that if an agent is indifferent between modifying their observable features and not modifying, they choose not to modify. 
This assumption is analogous to how we define the unit best response in~\Cref{def:ball}. 
See Figure \ref{fig:summary-sc} for a summary of the setting we consider. 
We are now ready to present our main result for strategic multiclass classification, which follows straightforwardly from~\Cref{ex:impossible} and is visualized in~\Cref{fig:bad-example}.

\begin{corollary}
Suppose $\mathcal{Y} = \mathbb{R}^2$, $k = 3$, and $r^A(2) > r^A(1) = r^A(0)$. For the following labeling over $\mathcal{Y}$, there exists a distribution over agents such that no policy can achieve perfect classification if agents strategically modify according to~\Cref{ass:br-sc}:
\begin{equation*}
    d_{i}^* = 
    \begin{cases}
        0 \;\; &\text{if } \langle \vbeta_{10}, \vy_{i} \rangle < 0 \text{ and } \langle \vbeta_{20}, \vy_{i} \rangle < 0\\
        1 \;\; &\text{if } \langle \vbeta_{21}, \vy_{i} \rangle < 0 \text{ and } \langle \vbeta_{10}, \vy_{i} \rangle \geq 0\\
        2 \;\; &\text{if } \langle \vbeta_{21}, \vy_{i} \rangle \geq 0 \text{ and } \langle \vbeta_{20}, \vy_{i} \rangle \geq 0\\
    \end{cases}
\end{equation*}
where $\vbeta_{20} = [1 \; 0.5]^\top$, $\vbeta_{21} = [-1 \; 0.5]^\top$, and $\vbeta_{10} = [2 \; 0]^\top$.
\end{corollary}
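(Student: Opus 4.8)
The plan is to recognize that the labeling in the statement is precisely the three-type instance behind \Cref{ex:impossible}, re-expressed in the language of strategic multiclass classification, and then to conclude through the necessity argument of \Cref{lem:nec-app}. The three labels partition $\mathcal{Y} = \mathbb{R}^2$ into regions cut out by the hyperplanes through the origin with normals $\vbeta_{10}, \vbeta_{20}, \vbeta_{21}$: label $2$ occupies the cone $\{\vy : \langle \vbeta_{20}, \vy\rangle \geq 0 \text{ and } \langle \vbeta_{21}, \vy \rangle \geq 0\}$ whose apex is the origin, while labels $0$ and $1$ occupy the two sectors flanking it. Since $r^A(2) > r^A(1) = r^A(0)$, every agent strictly prefers label $2$, so here label $2$ plays the role of the most-preferred intervention and labels $0,1$ the role of the lower types in \Cref{cond:nec-suf}. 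Interpreting true labels as types (\Cref{def:type}) and the best response of \Cref{ass:br-sc} as the unit best response, perfect classification is exactly strategyproofness: the policy must recover each agent's true label despite modification.

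The first step is to pin down the only constraint that label $0$/label $1$ agents impose on a perfectly-classifying policy. Let $R_2$ denote the set of feature vectors the policy maps to label $2$. I would observe that no point of $R_2$ may lie within $\delta$ of any type-$0$ or type-$1$ agent: otherwise that agent, strictly preferring $2$, best-responds into $R_2$ and is misclassified. Conversely, by the indifference tie-breaking rule, a type-$0$/type-$1$ agent that cannot reach $R_2$ does not modify and is labeled correctly. Hence the sole requirement those agents impose is that $R_2$ be disjoint from the union of their radius-$\delta$ balls, and subject to this the policy is free to label $2$ everything else.

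The core of the argument is to exhibit a distribution violating separation of types for label $2$. I would place a type-$2$ agent at a point $p$ just above the apex (e.g.\ $p = (0,\epsilon)$ on the axis of the cone for $\epsilon$ small relative to $\delta$), together with a spread of type-$0$ and type-$1$ agents arranged densely along the two boundary rays of the cone from the apex upward and in the adjacent sectors, chosen so that the union of their radius-$\delta$ balls contains the entire ball $\{\vy : \|\vy - p\|_2 \leq \delta\}$. The geometry making this possible is that the cone has fixed angular width but a cross-section that shrinks to zero at the apex, so for $\epsilon < \delta$ even the portion of $B_\delta(p)$ lying inside the cone can be swept out by balls centered just outside both boundaries. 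Once this covering holds, every feature vector $p$ could produce is also producible by some lower-type agent; combined with the disjointness constraint above, whatever $R_2$ the policy selects must miss $\{\vy : \|\vy - p\|_2 \leq \delta\}$, so the type-$2$ agent at $p$ can never obtain label $2$ and is misclassified. This is exactly the hypothesis and conclusion of \Cref{lem:nec-app}, which yields the impossibility.

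The main obstacle is the explicit geometric verification that the chosen type-$0$/type-$1$ balls cover $B_\delta(p)$ — in particular that balls centered outside the cone also cover the thin sliver of the cone lying inside $B_\delta(p)$ near the apex. Concretely I would fix $\epsilon$ small relative to $\delta$, take supporting agents along both boundary rays up to height on the order of $\epsilon + \delta$, and check the covering by a direct distance computation using the given normals $\vbeta_{20} = [1\ 0.5]^\top$ and $\vbeta_{21} = [-1\ 0.5]^\top$ (the topmost point $(0, \epsilon+\delta)$, for instance, is within $\delta$ of a boundary agent at the same height precisely when $\epsilon < \delta$). Everything else is bookkeeping that matches this instance to the construction of \Cref{ex:impossible} and transfers its failure of \Cref{cond:nec-suf} into the classification language.
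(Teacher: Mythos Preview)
Your proposal is correct and takes essentially the same approach as the paper. The paper does not give a separate proof of the corollary but simply states it follows from \Cref{ex:impossible}, whose appendix proof builds the identical instance (same $\vbeta$'s, a type-$2$ point $(0,\zeta)$ just above the apex, and type-$0$/type-$1$ units along lines at distance $\alpha$ from the cone boundaries) and verifies via an explicit slab computation that no point in the type-$2$ agent's $\delta$-ball can be assigned label $2$; your framing of the same geometric fact through \Cref{cond:nec-suf} and \Cref{lem:nec-app} as a covering of $B_\delta(p)$ is an equivalent reformulation, the only cosmetic difference being that the paper places the lower-type units on full half-lines (yielding clean slabs) rather than densely near the boundary rays.
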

\section{Learning Strategyproof Intervention Policies}\label{sec:learning}

\begin{algorithm}[t]
    \SetAlgoNoLine
    \SetAlgoNoEnd
    \vspace{1mm}
     \KwIn{Trajectories $\{(\vy_{i,pre}, \vy_{i,post}^{(0)})\}_{i \in \pv{\cN}{0}}$, $\{(\vy_{i,pre}, \vy_{i,post}^{(1)})\}_{i \in \pv{\cN}{1}}$} 
    \vspace{2mm}
    Compute $r_i^{(d_i)} = \sum_{t = T_0 + 1}^T \omega_t \cdot y_{i,t}^{(d_i)}$ for $i \in \range{n}$.\\
    \vspace{2mm}
    For $d \in \{0, 1\}$, use $\{(\vy_{i,pre}, \pv{r}{d}_i)\}_{i \in \pv{\cN}{d}}$ to estimate $\pv{\vbeta}{d}$ as $\pv{\widehat{\vbeta}}{d}$.\\
    \vspace{2mm}
    \For{$i = n+1, \ldots, n+m$}
    {
        \vspace{2mm}
        Assign intervention
        \begin{equation*}
            d_{i}^A = 
            \begin{cases}
                1 \; &\text{if } \langle \pv{\widehat{\vbeta}}{1} - \pv{\widehat{\vbeta}}{0}, \Tilde{\vy}_{i,pre} \rangle - \delta \|\pv{\widehat{\vbeta}}{1} - \pv{\widehat{\vbeta}}{0} \|_2 > 0\\
                0 \; &\text{otherwise.}
            \end{cases}
        \end{equation*}
    }
    \caption{Learning Strategyproof Interventions with One Treatment}
    \label{alg:sp-two-action}
\end{algorithm}
We now shift our focus from characterizing strategyproof intervention policies to \emph{learning} them from historical data.
We assume that the historical data has \emph{not} been strategically modified, as is the case when, e.g., interventions are assigned according to a \emph{randomized control trial} (since in such settings, units do not have an incentive to strategize). 
While~\Cref{thm:three-plus} provides a characterization of a strategyproof intervention policy when one exists, deploying such an intervention policy requires knowledge of the underlying relationships between pre-treatment outcomes and principal rewards, which may not be known \emph{a priori}. 
Additionally, it may be unreasonable to assume that the latent factor model holds \emph{exactly}, due to measurement error or randomness in the outcomes of each unit.\footnote{In our e-commerce example, noise may be due to randomness in day-to-day engagement with the platform.\looseness-1} With this in mind, we overload the notation of $\pv{y}{d}_{i,t}$ and consider the following relaxation of~\Cref{ass:lfm} throughout the sequel.
\begin{assumption}[Latent Factor Model; revisited]\label{ass:lfm-r}
Suppose the outcome for unit $i$ at time $t$ under treatment $d \in \range{k}_0$ takes the following factorized form:
\begin{equation*}\label{eq:lfm-r}
\begin{aligned}
\E[y_{i,t}^{(d)}] &= \langle \vu_t^{(d)}, \mathbf{v}_i \rangle\\ 
\text{and } \;\; \pv{y}{d}_{i,t} &= \E[y_{i,t}^{(d)}] + \varepsilon_{i,t},
\end{aligned}
\end{equation*}
where $\vu_t^{(d)} \in \mathbb{R}^s$ and $\mathbf{v}_i \in \mathbb{R}^s$ are defined as in~\Cref{ass:lfm}, and $\varepsilon_{i,t}$ is zero-mean sub-Gaussian random noise with variance at most $\sigma^2$. 
For simplicity, we assume that $|\E[y_{i,t}^{(d)}]| \leq 1$.
\end{assumption}
Note that under~\Cref{ass:lfm-r}, the reward reformulation (\Cref{prop:reward}) now holds \emph{in expectation}.
Inspired by the linear form of the strategyproof intervention policy of~\Cref{cor:two} for two interventions, we begin by deriving performance guarantees for a ``plug-in'' version of this intervention policy. Our algorithm proceeds as follows: 
Given historical trajectories of the form $\{(\vy_{i,pre}, \vy_{i,post}^{(d)})\}_{i \in \pv{\cN}{d}}$ for each $d \in \{0,1\}$, we can calculate the principal reward for assigning intervention $d_i$ to unit $i$ as 
\begin{equation*}
    r_i^{(d_i)} := \sum_{t = T_0 + 1}^T 
\omega_t \cdot y_{i,t}^{(d_i)} \; \text{ for }\; i \in \range{n},
\end{equation*}
where $\pv{\cN}{d}$ denotes the set of historical (non-strategic) units who received intervention $d$. 
Given the $(\vy_{i,pre}, \pv{r}{d}_i)$ pairs as training data,~\Cref{alg:sp-two-action} uses an error-in-variables regression method (e.g., principal component regression \cite{pcr_jolliffe, pcr_tibshirani}) to estimate $\pv{\vbeta}{0}, \pv{\vbeta}{1}$. 
The last step is to use the estimated linear coefficients to construct a ``plug-in'' estimator of intervention policy (\ref{eq:two}) to use when assigning interventions to the $m$ (strategic) out-of-sample units.
\begin{theorem}\label{cor:two-learn}
    Suppose $d \in \{0, 1\}$, $d_i^A$ is the intervention assigned to unit $i$ by~\Cref{alg:sp-two-action}, $d_i^*$ is the optimal intervention to assign to unit $i$, and 
    \begin{equation*}
        \pv{\widehat{r}}{d}_{i} := \langle \pv{\widehat{\vbeta}}{d}, \vy_{i,pre} \rangle
    \end{equation*}
    is the estimated principal reward under intervention $d$.
    Then for any test unit $n+i$, 
    %
    %
    \begin{equation}\label{eq:two-learn}
        \E [\pv{r}{d_{n+i}^A}_{n+i}] - \E [\pv{r}{d_{n+i}^*}_{n+i} ] \leq |\pv{\widehat{r}}{0}_{n+i} - \E [\pv{r}{0}_{n+i}]| + |\pv{\widehat{r}}{1}_{n+i} - \E [\pv{r}{1}_{n+i}]|
    \end{equation}
\end{theorem}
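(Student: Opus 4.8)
The plan is to reduce the behaviour of \Cref{alg:sp-two-action} on a strategic test unit to a simple comparison of \emph{estimated} rewards evaluated at the unit's \emph{true} (unmodified) pre-intervention outcomes, and then to run a standard plug-in regret argument. Throughout I write $R^{(d)} := \E[\pv{r}{d}_{n+i}]$, so that by definition $d_{n+i}^* = \arg\max_{d \in \{0,1\}} R^{(d)}$, and I recall that strategic modification in the pre-intervention period leaves the latent factor $\vv_{n+i}$ (hence the true rewards $R^{(d)}$) unchanged.

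First I would show that, writing $\widehat{\Delta} := \pv{\widehat{\vbeta}}{1} - \pv{\widehat{\vbeta}}{0}$, a strategic unit $n+i$ is assigned intervention $1$ by \Cref{alg:sp-two-action} if and only if $\langle \widehat{\Delta}, \vy_{n+i,pre} \rangle > 0$, i.e. if and only if $\pv{\widehat{r}}{1}_{n+i} > \pv{\widehat{r}}{0}_{n+i}$. This is the estimated analogue of the strategyproofness argument behind \Cref{cor:two}: since units prefer intervention $1$, unit $n+i$ best-responds by maximizing $\langle \widehat{\Delta}, \widehat{\vy}\rangle$ over the $\delta$-ball around $\vy_{n+i,pre}$, whose maximum value is $\langle \widehat{\Delta}, \vy_{n+i,pre}\rangle + \delta\|\widehat{\Delta}\|_2$. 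Substituting this into the decision rule of \Cref{alg:sp-two-action}, the $-\delta\|\widehat{\Delta}\|_2$ shift exactly cancels the best-response displacement, so the unit can cross the boundary precisely when $\langle \widehat{\Delta}, \vy_{n+i,pre}\rangle > 0$; otherwise, by the indifference/tie-breaking convention of \Cref{ass:mod}, it reports its true outcomes and is assigned $0$. Hence $d_{n+i}^A = \arg\max_{d \in \{0,1\}} \pv{\widehat{r}}{d}_{n+i}$ (ties to $0$), and all dependence on the strategic modification disappears.

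With this reduction, the second step is a two-case plug-in argument. If $d_{n+i}^A = d_{n+i}^*$ the regret is zero and the bound holds. Otherwise, say $d_{n+i}^* = 1$ and $d_{n+i}^A = 0$ (the other case is symmetric): then $R^{(1)} - R^{(0)} \ge 0$ while $\pv{\widehat{r}}{1}_{n+i} - \pv{\widehat{r}}{0}_{n+i} \le 0$, and I would telescope through the estimated rewards,
\begin{equation*}
R^{(1)} - R^{(0)} = \big(R^{(1)} - \pv{\widehat{r}}{1}_{n+i}\big) + \big(\pv{\widehat{r}}{1}_{n+i} - \pv{\widehat{r}}{0}_{n+i}\big) + \big(\pv{\widehat{r}}{0}_{n+i} - R^{(0)}\big),
\end{equation*}
dropping the nonpositive middle term to obtain $R^{(1)} - R^{(0)} \le |\pv{\widehat{r}}{0}_{n+i} - R^{(0)}| + |\pv{\widehat{r}}{1}_{n+i} - R^{(1)}|$, exactly the claimed right-hand side. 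Note that, because $d_{n+i}^*$ is optimal, the left-hand side of the bound as literally written, $\E[\pv{r}{d_{n+i}^A}_{n+i}] - \E[\pv{r}{d_{n+i}^*}_{n+i}]$, is nonpositive and so is dominated by the nonnegative right-hand side immediately; the telescoping step is what delivers the substantive matching bound on the regret magnitude $\E[\pv{r}{d_{n+i}^*}_{n+i}] - \E[\pv{r}{d_{n+i}^A}_{n+i}]$.

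I expect the only genuine obstacle to be the first step: carefully verifying that the $\delta$-shift built into \Cref{alg:sp-two-action} cancels the strategic best-response so that the induced assignment depends only on the true covariates $\vy_{n+i,pre}$, and handling the boundary/indifference case consistently with \Cref{ass:mod}. Once that reduction is in place, the regret bound is a routine add-and-subtract inequality, and this cleanly sets up the downstream reduction to error-in-variables estimation error used in \Cref{thm:out-of-sample}.
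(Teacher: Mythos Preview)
Your proposal is correct and follows essentially the same approach as the paper. The paper isolates your first step as a standalone lemma (\Cref{prop:not-sp-app}), showing via two explicit cases that \Cref{alg:sp-two-action} on strategic units assigns the same intervention as the rule $d_i^B = \mathbbm{1}\{\pv{\widehat{r}}{1}_i > \pv{\widehat{r}}{0}_i\}$ on truthful units; it then runs the identical plug-in regret chain $\E[\pv{r}{d_i^*}_i] - |\pv{\widehat{r}}{d_i^*}_i - \E[\pv{r}{d_i^*}_i]| \le \pv{\widehat{r}}{d_i^*}_i \le \pv{\widehat{r}}{d_i^A}_i \le \E[\pv{r}{d_i^A}_i] + |\pv{\widehat{r}}{d_i^A}_i - \E[\pv{r}{d_i^A}_i]|$, which is your telescoping written slightly differently. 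Your observation about the sign of the left-hand side as literally stated is also well taken.
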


\Cref{cor:two-learn} shows that the difference in performance of~\Cref{alg:sp-two-action} and a strategyproof intervention policy that assigns interventions optimally can be bounded by the difference between the actual and estimated rewards under each intervention. 
Therefore, if $\pv{\widehat{\vbeta}}{0}, \pv{\widehat{\vbeta}}{1}$ are good estimates $\pv{\vbeta}{0}, \pv{\vbeta}{1}$,~\Cref{alg:sp-two-action} will perform well. 
In~\Cref{sec:apply-pcr}, we leverage principal component regression to estimate $\vbeta^{(0)}$ and $\vbeta^{(1)}$ and obtain high-probability finite sample guarantees. 
Since we are dealing with strategically manipulated data, we are unable to apply prior results for learning from panel data in a black-box way.
Our key insight which enables us to obtain performance guarantees for Algorithm~\ref{alg:sp-two-action} is that its performance is matched by another intervention policy which makes decisions on units \emph{which are not strategic} (intervention policy \ref{eq:not-sp-app} in the Appendix). 
Given this observation, the bound follows readily from algebraic manipulation.\looseness-1

Next we show that analogous performance guarantees can be obtained for the extension of~\Cref{alg:sp-two-action} to the setting where there are more than two interventions, when there is a sufficiently large \emph{gap in the principal's expected rewards} for each unit type. 
This property is natural in many settings of interest; in our e-commerce running example, it corresponds to the principal deriving very different rewards from offering a discount that is not optimal for each group. 
We now present performance guarantees for~\Cref{alg:learning-sp}, which is an extension of~\Cref{alg:sp-two-action} to settings with more than two interventions.\looseness-1
\begin{corollary}[Informal; detailed version in Corollary~\ref{cor:multi-app}]\label{cor:multi} 
    For $\alpha > 0$, suppose the principal's expected rewards satisfy a sufficiently large reward gap $g(\alpha)$ (\Cref{ass:margin-app}). Then,\looseness-1
    \begin{equation*}
        \E[\pv{r}{d_{n+i}^A}_{n+i}] - \E[\pv{r}{d_{n+i}^*}_{n+i}] \leq \sum_{d=0}^{k-1} |\pv{\widehat{r}}{d}_{n+i} - \E[\pv{r}{d}_{n+i}]|
    \end{equation*}
    with probability at least $1 - \alpha$ for any test unit $n + i$, where $d_{n+i}^A$ is the intervention assigned to unit $n+i$ by~\Cref{alg:learning-sp}, and $d_{n+i}^*$ and $\pv{\widehat{r}}{d}_{n+i}$ are defined as in~\Cref{cor:two-learn}.\looseness-1
\end{corollary}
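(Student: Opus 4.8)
The plan is to mirror the two-step strategy behind \Cref{cor:two-learn}, upgrading each step so that it survives the move from $k=2$ to $k\ge 3$ interventions, with \Cref{ass:margin-app} (the reward gap $g(\alpha)$) doing the work that the unconditional geometry did in the binary case. Throughout I write $\pv{\widehat{r}}{d}_{n+i} = \langle \pv{\widehat{\vbeta}}{d}, \vy_{n+i,pre}\rangle$ for the estimated reward \emph{evaluated at the true (unmodified) outcomes}, and I bound the suboptimality gap $\E[\pv{r}{d_{n+i}^*}_{n+i}] - \E[\pv{r}{d_{n+i}^A}_{n+i}]$, which upper-bounds the quantity in the statement.

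\emph{First step — reduce the strategic decision to a plug-in argmax on unmodified outcomes.} I would show that, with probability at least $1-\alpha$ over the historical data, the estimates $\{\pv{\widehat{\vbeta}}{d}\}$ induce an instance that still satisfies separation of types (\Cref{cond:nec-suf}), so that by \Cref{thm:three-plus} the policy of \Cref{alg:learning-sp} is genuinely strategyproof for the estimated instance and therefore assigns each strategic test unit $n+i$ its \emph{estimated type}, i.e. $d_{n+i}^A = \arg\max_{d \in \range{k}_0} \pv{\widehat{r}}{d}_{n+i}$. This is the multiclass analogue of the boundary-shift cancellation of \Cref{cor:two}: a best-responding unit moves by at most $\delta$, and the policy's decision regions are the $\delta$-expanded best-response regions built from the $\pv{\widehat{\vbeta}}{d}$'s, so the $\delta$ of strategic slack is exactly absorbed and the assigned intervention equals the estimated type read off the true outcome. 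The reward gap is precisely what rules out the obstruction of \Cref{ex:impossible}: it keeps competing estimated decision boundaries far enough apart that no lower type can mimic a higher one, and it guarantees that the cumulative PCR estimation error (controlled by the finite-sample guarantees of \Cref{sec:PCR-appendix} and \citet{agarwal2023adaptive}) cannot flip any unit's estimated type, which is where the $1-\alpha$ confidence enters.

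\emph{Second step — algebra.} Given the reduction, I would decompose
\begin{align*}
\E[\pv{r}{d_{n+i}^*}_{n+i}] - \E[\pv{r}{d_{n+i}^A}_{n+i}]
&= \big(\E[\pv{r}{d_{n+i}^*}_{n+i}] - \pv{\widehat{r}}{d_{n+i}^*}_{n+i}\big)
+ \big(\pv{\widehat{r}}{d_{n+i}^*}_{n+i} - \pv{\widehat{r}}{d_{n+i}^A}_{n+i}\big) \\
&\quad + \big(\pv{\widehat{r}}{d_{n+i}^A}_{n+i} - \E[\pv{r}{d_{n+i}^A}_{n+i}]\big).
\end{align*}
The middle term is nonpositive because $d_{n+i}^A$ maximizes $\pv{\widehat{r}}{d}_{n+i}$ by the first step, and bounding the two remaining terms by their absolute values — then enlarging the resulting two-term sum to the full sum over $d \in \range{k}_0$ — yields the stated bound (and, since that sum is nonnegative, the one-sided inequality as written follows immediately).

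The main obstacle is the first step: making the argmax-equivalence precise for $k\ge 3$. In the binary case the single boundary shift is exact and unconditional, whereas \Cref{ex:impossible} shows the shift can fail outright when there are three or more interventions, so I need the gap to quantitatively exclude the ``close-enough types'' configuration \emph{and} to show that the $\delta$-expanded region-membership test used by \Cref{alg:learning-sp} recovers $\arg\max_d \pv{\widehat{r}}{d}_{n+i}$ at the true outcome. Pinning down the exact size of $g(\alpha)$ in terms of the PCR error and the geometry of the estimated decision regions is where the real work lies; everything after the reduction is the same short algebra as in \Cref{cor:two-learn}.
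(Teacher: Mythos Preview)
Your two-step plan and the second step's algebra match the paper exactly. The difference is in how the first step is carried out. You propose to go through separation of types and \Cref{thm:three-plus}, but \Cref{thm:three-plus} certifies a \emph{different} policy (the min-type-in-best-response-ball rule, implemented via a QP) than the pairwise-shifted linear rule actually used by \Cref{alg:learning-sp}, so invoking it does not directly show \Cref{alg:learning-sp} is strategyproof for the estimated instance. The paper instead proves a direct truthfulness lemma (\Cref{prop:gap}): under \Cref{ass:margin-app}, (i) no unit can manipulate to obtain an intervention strictly above its estimated argmax (the Case~2 argument from \Cref{prop:not-sp-app} applied pairwise), and (ii) reporting truthfully already clears \Cref{alg:learning-sp}'s shifted thresholds at the estimated argmax, so truthfulness is weakly dominant and the algorithm assigns $\arg\max_d \pv{\widehat{r}}{d}_{n+i}$ on the unmodified $\vy_{n+i,pre}$.

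One more correction on the source of the $1-\alpha$: it is \emph{not} a PCR confidence event. In \Cref{ass:margin-app} the PCR error appears deterministically via the realized $\|\pv{\vbeta}{d}-\pv{\widehat{\vbeta}}{d}\|_2$ terms; the $\alpha$-dependent term $6\sigma\Bar{\beta}\sqrt{2T_0\log(1/\alpha)}$ comes from a Hoeffding bound on the measurement noise $\varepsilon_{i,t}$ in $\vy_{n+i,pre}$ (recall $\pv{\widehat{r}}{d}_{n+i} = \langle \pv{\widehat{\vbeta}}{d}, \vy_{n+i,pre}\rangle$ is evaluated at the \emph{noisy} pre-treatment outcomes, and step (ii) above compares it to $\E[\pv{r}{d}_{n+i}]$). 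So the event with probability $\geq 1-\alpha$ is a noise event for the test unit, not an estimation event on the historical data.
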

\begin{algorithm}[t]
        \SetAlgoNoLine
        \SetAlgoNoEnd
        \vspace{1mm}
        \KwIn{Trajectories $\{\{(\vy_{i,pre}, \vy_{i,post}^{(d)})\}_{i \in \pv{\cN}{d}}\}_{d=0}^{k-1}$}
        \vspace{2mm}
        Compute $r_i^{(d_i)} = \sum_{t = T_0 + 1}^T \omega_t \cdot y_{i,t}^{(d_i)}$ for $i \in \range{n}$.\\
        \vspace{2mm}
        For $d \in \range{k}$, use $\{(\vy_{i,pre}, \pv{r}{d}_i)\}_{i \in \pv{\cN}{d}}$ to estimate $\pv{\vbeta}{d}$ as $\pv{\widehat{\vbeta}}{d}$.\\
        \vspace{2mm}
        \For{$i = n+1, \ldots, n+m$}
        {
            \vspace{2mm}
            Assign intervention $d_i^B = d$ to unit $i$ if 
            \begin{equation*}
            \begin{aligned}
                \langle \pv{\widehat{\vbeta}}{d} - \pv{\widehat{\vbeta}}{d'}, \Tilde{\vy}_{i,pre} \rangle - \delta \| \pv{\widehat{\vbeta}}{d} - \pv{\widehat{\vbeta}}{d'} \|_2 &> 0 \; \text{ for all } \; d' < d\\
                \text{and } \; \langle \pv{\widehat{\vbeta}}{d} - \pv{\widehat{\vbeta}}{d'}, \Tilde{\vy}_{i,pre} \rangle + \delta \| \pv{\widehat{\vbeta}}{d} - \pv{\widehat{\vbeta}}{d'} \|_2 &\geq 0 \; \text{ for all } \; d' > d\\
            \end{aligned}
            \end{equation*}
        }
        \caption{Learning Strategyproof Interventions under the Expected Reward Gap Assumption}
        \label{alg:learning-sp}
\end{algorithm}

Intuitively, a gap assumption is not needed in the single treatment regime since a unit will only modify their pre-treatment behavior in order to receive the (single) treatment.
This is in contrast to the multi-treatment setting, where a unit's best response may be in one of several directions depending on which treatment(s) they are capable of receiving under a particular intervention policy. 
Obtaining performance guarantees for learning algorithms which do not require a gap assumption appears challenging for the general case, as the unit best response is not guaranteed to converge smoothly as $\{\pv{\widehat{\vbeta}}{d}\}_{d=0}^{k-1}$ approaches $\{\pv{\vbeta}{d}\}_{d=0}^{k-1}$.
%
%
\subsection{Evidence Towards the Necessity of a Gap}
To build intuition as to why such a gap may be necessary, consider the following example.\looseness-1
\begin{figure}
    \centering
    \includegraphics[width=0.6\textwidth]{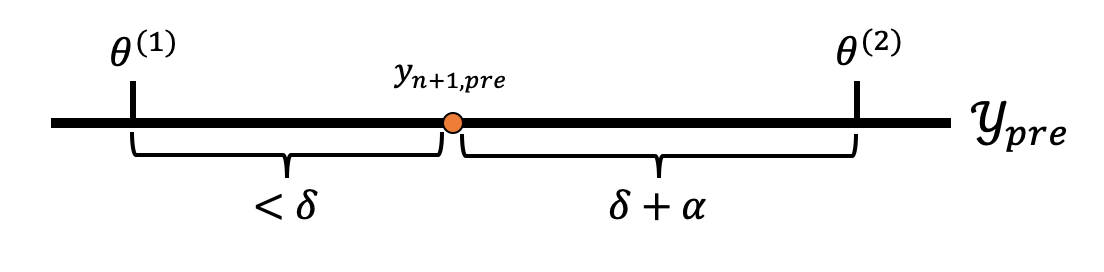}
    \caption{Visualization of the setting in~\Cref{ex:1D}. If $\pv{\theta}{1}$ and $\pv{\theta}{2}$ are perfectly known to the principal, unit $n+1$ will modify their pre-intervention outcome to receive intervention $1$ (left of $\pv{\theta}{1}$). However if the principal's estimate of $\pv{\theta}{2}$ is sufficiently inaccurate, unit $n+1$ may be able to modify their pre-intervention outcome to receive intervention $2$ (right of $\pv{\theta}{2}$).}
    \label{fig:delta-example}
\end{figure}
\begin{example}\label{ex:1D}
    Consider the one-dimensional setting in~\Cref{fig:delta-example}, where the unit can manipulate by $\delta$ in either direction. Specifically, let $\mathcal{Y}_{pre} \in \mathbb{R}$, $d \in \{0, 1, 2\}$, and let the optimal intervention policy\footnote{We do not specify the principal's reward function so as to simplify the exposition.} be:\looseness-1
    \begin{equation}\label{eq:opt-ex}
        d_i^* = 
        \begin{cases}
            1 \; \text{ if } \; \Tilde{y}_{i,pre} \leq \pv{\theta}{1}\\
            2 \; \text{ if } \; \Tilde{y}_{i,pre} \geq \pv{\theta}{2}\\
            0 \; \text { o.w.}
        \end{cases}
    \end{equation}
    for some $\pv{\theta}{1} \in \mathbb{R}$, $\pv{\theta}{2} \in \mathbb{R}$, and $\pv{\theta}{1} < \pv{\theta}{2}$.  Suppose the principal deploys the following ``plug-in'' estimate of intervention policy (\ref{eq:opt-ex}):
    \begin{equation*}
        d_i^P = 
        \begin{cases}
            1 \; \text{ if } \; \Tilde{y}_{i,pre} \leq \pv{\widehat{\theta}}{1}\\
            2 \; \text{ if } \; \Tilde{y}_{i,pre} \geq \pv{\widehat{\theta}}{2}\\
            0 \; \text { o.w.}
        \end{cases}
    \end{equation*}
    for $\pv{\widehat{\theta}}{1} \in \mathbb{R}$ and $\pv{\widehat{\theta}}{2} \in \mathbb{R}$. Moreover, suppose that the principal has perfect knowledge of $\pv{\theta}{1}$ (i.e., $\pv{\widehat{\theta}}{1} = \pv{\theta}{1}$) and after observing data from $n$ non-strategic units, $|\pv{\widehat{\theta}}{2}(n) - \pv{\theta}{2}| = c/n$ for some $c > 0$.
    Now consider a strategic unit $n+1$ with $y_{n+1,pre} = y$ such that $y - \pv{\theta}{1} < \delta$ and $\pv{\theta}{2} - y = \delta + \alpha$, for some $\alpha > 0$. If the principal had perfect knowledge of $\pv{\theta}{2}$, unit $n+1$'s best-response would be to modify their pre-treatment outcome to $\pv{\theta}{1}$ and receive intervention $1$. However, given the expression for $\pv{\widehat{\theta}}{2}$, we can write the best-response of unit $n+1$ as
    \begin{equation*}
        \Tilde{y}_{n+1,pre} = 
        \begin{cases}
            \pv{\theta}{1} \; &\text{ if } \; \pv{\widehat{\theta}}{2} = \pv{\theta}{2} + \frac{c}{n}\\
            \pv{\widehat{\theta}}{2} (= y + \delta + \alpha - \frac{c}{n}) \; &\text{ if } \; \pv{\widehat{\theta}}{2} = \pv{\theta}{2} - \frac{c}{n}\\
        \end{cases}
    \end{equation*}
    as long as the number of non-strategic units $n < \frac{c}{\alpha}$. Therefore under this setting, if the $(n+1)$-st unit has pre-intervention outcome $y$, then their best-response may be highly discontinuous as long as $n < \frac{c}{\alpha}$ (recall that $\alpha$ can be chosen to be arbitrarily small), despite the fact that $\pv{\theta}{1}$ is perfectly known to the principal and $\pv{\widehat{\theta}}{2}$ converges to $\pv{\theta}{2}$ at the ``fast'' rate of $\mathcal{O}(1/n)$.  
\end{example}
\subsection{Application of PCR to Obtain End-to-End Guarantees}\label{sec:apply-pcr}
In order to leverage the out-of-sample guarantees for PCR, we make the following assumptions on the latent factors. 
\begin{assumption}[Well-balancing assumptions]
\label{ass:spectrum}
The following hold: (a) $\sigma_i(\E[Y_{pre}^{(d)}]) = \Theta\left(\sqrt{\frac{c_n(d)T_0}{r}}\right)$ for all $i \in [r]$, (b) $c_n(d) = \Theta(c_n(d'))$ for all $d, d' \in \range{k}_0$, and (c) $k = O(s)$.

\end{assumption}
\Cref{ass:spectrum} says that (a) the non-zero singular values of the expected pre-intervention outcomes for the set of units who received each intervention are roughly equal, (b) each intervention is assigned roughly the same number of times, and (c) the number of interventions is on the order of the dimension of the latent subspace. 
(Part (c) is trivially satisfied whenever there is a single treatment and control.)
We are now ready to state our formal result for the convergence rates of~\Cref{alg:sp-two-action} using PCR. An analogous bound may be obtained for~\Cref{alg:learning-sp}.

\begin{corollary}\label{thm:out-of-sample}
Let $\delta \in (0, 1)$ be an arbitrary confidence parameter and regularization parameter $\rho > 0$ be chosen to be sufficiently small, as detailed in~\citet{agarwal2023adaptive}. 
Further, assume that Assumptions \ref{ass:lfm-r} (latent factor model), \ref{ass:margin-app} (expected reward gap), and \ref{ass:span} (subspace inclusion) are satisfied, and rank$(\mathbb{E}[Y_{pre}^{(d)}]) = s$ for all $d \in \range{k}_0$. 
If $T_0 \leq \frac{1}{2} T$, then under~\ref{ass:spectrum} (balanced spectra) with probability at least $1 - \mathcal{O}(\delta)$,
\begin{equation*}
\begin{aligned}
    &\E[\pv{r}{d_{n+i}^A}_{n+i}] - \E[\pv{r}{d_{n+i}^*}_{n+i}] \leq \Tilde{\mathcal{O}} \left( \frac{r^2}{T_0 \wedge n} + \frac{r}{\sqrt{T_0 \wedge n}} + \frac{\sqrt{T_0}}{\sqrt{T-T_0}}\right)
\end{aligned}
\end{equation*}
for any test unit $n+i$.
\end{corollary}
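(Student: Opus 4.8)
The plan is to chain together two ingredients that are already in place: the reduction from policy suboptimality to per-intervention prediction error (\Cref{cor:two-learn} for $k=2$, \Cref{cor:multi} for general $k$), and the out-of-sample prediction guarantees for regularized principal component regression from \citet{agarwal2023adaptive}. First I would invoke \Cref{cor:multi}: under the expected reward gap (\Cref{ass:margin-app}), for any test unit $n+i$,
\begin{equation*}
    \E[\pv{r}{d_{n+i}^A}_{n+i}] - \E[\pv{r}{d_{n+i}^*}_{n+i}] \leq \sum_{d=0}^{k-1} |\pv{\widehat{r}}{d}_{n+i} - \E[\pv{r}{d}_{n+i}]|
\end{equation*}
holds with probability at least $1 - \alpha$. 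This reduces the task to controlling each entrywise out-of-sample prediction error $|\pv{\widehat{r}}{d}_{n+i} - \E[\pv{r}{d}_{n+i}]|$, where $\pv{\widehat{r}}{d}_{n+i} = \langle \pv{\widehat{\vbeta}}{d}, \vy_{n+i,pre}\rangle$ and $\pv{\widehat{\vbeta}}{d}$ is the PCR estimate of $\pv{\vbeta}{d}$ from the units in $\pv{\cN}{d}$.

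Second, for each fixed $d$ I would apply the per-unit PCR prediction bound of \citet{agarwal2023adaptive}. Its hypotheses can be checked from our assumptions: $\pv{\vbeta}{d}$ lies in the rowspace of $\E[\pv{Y}{d}_{pre}]$ by the reward reformulation (\Cref{prop:reward}) together with $\mathrm{rank}(\E[\pv{Y}{d}_{pre}]) = s$; the required train/test subspace-inclusion condition is supplied by \Cref{ass:span}; and the signal-to-noise lower bound follows from \Cref{ass:spectrum}. The resulting bound is expressed in terms of $\pv{\snr}{d}$, the latent dimension $s$, the coefficient norm $\|\pv{\vbeta}{d}\|$, and the variance of the \emph{response}, which is not constant here because the training reward $\pv{r}{d}_i = \sum_{t>T_0}\omega_t \pv{y}{d}_{i,t}$ is a noisy sum over the $T-T_0$ post-treatment outcomes under \Cref{ass:lfm-r}.

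Third, I would simplify the SNR using \Cref{ass:spectrum}. Part~(a) makes all $s$ nonzero singular values of $\E[\pv{Y}{d}_{pre}]$ of order $\sqrt{c_n(d)T_0/r}$, and part~(b) makes $c_n(d)$ proportional to the balanced per-intervention count $\pv{n}{d}$, so that $\pv{\snr}{d} = \sqrt{c_n(d)T_0/r}/(\sqrt{\pv{n}{d}}+\sqrt{T_0}) = \Theta(\sqrt{(\pv{n}{d}\wedge T_0)/r})$. Substituting $(\pv{\snr}{d})^2 = \Theta((\pv{n}{d}\wedge T_0)/r)$ collapses the variance and higher-order terms of the PCR bound into $\Tilde{\mathcal{O}}(r/\sqrt{T_0\wedge n} + r^2/(T_0\wedge n))$, while the response-noise term, after using the $\Theta(\sqrt{T-T_0})$ concentration of the reward sum and $T_0 \leq T/2$, contributes the $\sqrt{T_0}/\sqrt{T-T_0}$ term. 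Summing over the $d \in \range{k}_0$ interventions and invoking part~(c), $k=O(s)$, absorbs the factor of $k$ into the polynomial powers of $r$ and the $\Tilde{\mathcal{O}}(\cdot)$; a union bound over the $k$ PCR failure events and the event of \Cref{cor:multi} yields overall success probability $1 - \mathcal{O}(\delta)$.

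The hard part will be twofold. The first difficulty is that the statement is a \emph{per-unit} guarantee for an arbitrary test unit $n+i$, whereas the classical PCR analyses (e.g.\ the averaged $MSE_{test}$ bounds) control the mean error over all $m$ test units; obtaining the pointwise bound requires the entrywise out-of-sample results of \citet{agarwal2023adaptive} and verifying the incoherence/subspace conditions for a single query point $\vy_{n+i,pre}$. The second difficulty is bookkeeping the response noise: the raw reward has noise of order $\sqrt{T-T_0}\,\sigma$, and I must track how this propagates through the de-noising-and-regress steps of PCR to produce exactly the $\sqrt{T_0}/\sqrt{T-T_0}$ term---this is the only place the post-treatment horizon and the condition $T_0 \leq T/2$ enter, and getting the normalization right is the most delicate step.
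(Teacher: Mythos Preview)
Your proposal is correct and follows essentially the same approach as the paper: the paper's entire proof is the single sentence ``\Cref{thm:out-of-sample} follows immediately from applying \citet[Theorem~5.5]{agarwal2023adaptive} to \Cref{cor:two-learn},'' i.e., exactly the two-step chain (reduction to per-intervention prediction error, then the PCR out-of-sample bound) that you lay out. Your additional unpacking of how \Cref{ass:spectrum} converts the SNR terms into $r/(T_0\wedge n)$ factors and how the post-treatment noise produces the $\sqrt{T_0}/\sqrt{T-T_0}$ term is more detailed than what the paper writes, but it is precisely what the cited black-box result from \citet{agarwal2023adaptive} delivers.
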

\noindent \Cref{thm:out-of-sample} follows immediately from applying~\citet[Theorem 5.5]{agarwal2023adaptive} to~\Cref{cor:two-learn}.
\section{Experiments}\label{sec:experiments}
We empirically evaluate the performance of~\Cref{alg:sp-two-action} on panel data constructed using time-series measurements of product sales at several stores. Our goal is to evaluate the performance of our methods when (1) data is \emph{not} generated using a latent factor model and (2) the principal has \emph{imperfect knowledge} about the units' ability to modify their pre-intervention outcomes.

\paragraph{Setup}
Our initial dataset consists of weekly sales data from three products at nine different stores over the course of 18 months.\footnote{The dataset we use can be found at \href{https://raw.githubusercontent.com/susanli2016/Machine-Learning-with-Python/master/data/Sales_Product_Price_by_Store.csv}{\texttt{https://raw.githubusercontent.com/susanli2016/}}\\
\href{https://raw.githubusercontent.com/susanli2016/Machine-Learning-with-Python/master/data/Sales_Product_Price_by_Store.csv}{\texttt{Machine-Learning-with-Python/master/data/Sales\_Product\_Price\_by\_Store.csv}}.}
We consider two interventions: \texttt{discount} (the product is on sale) and \texttt{no discount} (the product is not on sale). 
We define a unit to be a (store, product) pair which was under \texttt{no discount} for five consecutive weeks, followed by either \texttt{discount} or \texttt{no discount} for three consecutive weeks. 
\khedit{Since we only have access to historical data, we only get to see post-intervention outcomes under only one of \texttt{discount} or \texttt{no discount} for any given unit. 
Therefore we use these (unit, intervention, outcome) tuples to} run a synthetic interventions procedure to generate counterfactual outcomes for all units under both \texttt{discount} and \texttt{no discount}.  
We use the resulting trajectories as the ground-truth rewards for each unit under both interventions.

In order to train our model, we randomly assign interventions to $50\%$ of the units (135 trajectories), and we use the remaining $50\%$ to test the performance.
Under such a setting, strategic behavior may arise when, for example, a local store manager wishes to maximize the number of products sold at their specific location, while the owner of the store chain ultimately wants to maximize revenue. 
In this case, the local store manager could conceivably have an incentive to strategically misreport their weekly revenue during the pre-treatment time period so that their products are given a discount and their sales increase (as was the case for Zara, previously mentioned in~\Cref{sec:intro}).

\begin{table}
    \caption{Average normalized change in revenue and standard deviation over 10 runs for various estimates of $\delta$ (denoted by $\widehat{\delta}$). Note that $\widehat{\delta} = 0$ corresponds to the naive policy which does not take any strategic interactions into consideration and $\widehat{\delta} = \delta$ corresponds to the intervention policy of~\Cref{alg:sp-two-action}.}
	\label{tab:results}
	\begin{minipage}{\columnwidth}
		\begin{center}
			\begin{tabular}{lll}
				$\widehat{\delta} / \delta$ & Normalized $\Delta$ Revenue & Standard Deviation\\
                    \toprule
                    0 (Naive Policy) & 0.237 & 0.110\\
                    0.2 & 0.527 & 0.126\\
                    0.5 & 0.831 & 0.033\\
                    1 (\Cref{alg:sp-two-action}) & 0.989 & 0.011\\
                    2 & 0.943 & 0.014\\
                    5 & 0.846 & 0.024\\
				\bottomrule
			\end{tabular}
		\end{center}
	\end{minipage}
\end{table}

\paragraph{Results} 
See~\Cref{tab:results} for a summary of our results. 
For an intervention policy $\pi$, we are interested in the increase in revenue from assigning interventions according to $\pi$, as opposed to the alternative. We normalize with respect to the \emph{optimal} improvement in revenue, i.e. the best possible improvement if the principal were able to observe both counterfactual trajectories before assigning an intervention. Denote the intervention assigned by policy $\pi$ to unit $n+i$ as $d_{n+i}^{\pi}$ and the intervention not assigned by $\pi$ to unit $n+i$ as $\neg d_{n+i}^{\pi}$. Formally,
\begin{equation*}
    \text{Normalized } \Delta \text{ Revenue} := \frac{\sum_{i \in \range{m}} \left( \pv{r}{d_{n+i}^{\pi}}_{n+i} - \pv{r}{\neg d_{n+i}^{\pi}}_{n+i}\right)}{\sum_{i \in \range{m}} \left( \pv{r}{d_{n+i}^{*}}_{n+i} - \pv{r}{\neg d_{n+i}^{*}}_{n+i} \right)} 
\end{equation*}
Note that Normalized $\Delta$ Revenue is at most $1$.
Since the unit effort budget $\delta$ may be unknown in practice, we also examine the performance of~\Cref{alg:sp-two-action} when the principal's estimate of $\delta$ (the unit's effort budget; defined in~\Cref{def:ball}) is misspecified as $\widehat \delta$. 
We find that the intervention policy of~\Cref{alg:sp-two-action} is able to achieve near-optimal improvement in revenue, in contrast to the relatively poor performance of the naive policy which does not consider incentives. 
Additionally, we observe that, under the experimental setup we consider, the performance of~\Cref{alg:sp-two-action} degrades gracefully as a function of model misspecification (as quantified by $\widehat \delta / \delta$). 
\khedit{Our empirical results suggest that if $\delta$ is unknown to the principal, it may be better for them to use an overestimate instead of an underestimate.} 
\section{Conclusions and Future Work}
%
We introduce a framework for strategy-aware decision-making in panel data settings. In settings captured by our framework, we provide a sufficient and necessary condition for a strategyproof intervention policy to exist. 
Next we specialize our results to the canonical setting where unit outcomes are generated via a latent factor model, and the principal's reward is a linear combination of post-intervention outcomes. 
Under this setting, we show that the strategyproof intervention policy takes a simple closed form, when one exists. 
Additionally when there is only a single treatment and control, we show that a strategyproof intervention policy always exists, and we provide an algorithm for learning such an intervention policy from historical data. 
We also show that analogous performance guarantees can be obtained in the general setting under a gap assumption on the principal's rewards.
Finally, we provide concrete rates of convergence for learning a strategyproof intervention policy when the parameters of interest are estimated via principal component regression. 
Along the way, we prove impossibility results for strategic multiclass classification which may be of independent interest.
There are several exciting directions for future work.\looseness-1 
\paragraph{Gap-free learning with multiple treatments}
%
%
In order to obtain convergence rates for the intervention policy of~\Cref{alg:learning-sp}, our analysis relies on a gap assumption between the rewards of different interventions. It would be interesting to further explore if such a gap is indeed necessary, or if a tighter analysis or different algorithm could be used to weaken or remove this assumption.\looseness-1
\paragraph{Heterogeneous unit preferences and effort.}
%
%
Our model of homogeneous unit preferences and effort budgets captures a variety of settings (e.g., patients may prefer an effective-but-costly medical treatment to a less-effective-but-cheaper alternative, customers generally prefer higher discounts to lower ones). 
However, it would be interesting to study more general games between the principal and strategic units which allow for, e.g., heterogeneous effort budgets and unit preferences over interventions.

%
\paragraph{Truthful mechanisms} 
Recall that our strategyproof mechanisms may require some units to strategize in order to be assigned the intervention that matches their true type. 
It would be interesting to explore the feasibility of designing mechanisms for incentivizing \emph{truthful} behavior in panel data settings, i.e., incentivizing the agents to not alter their data at all when reporting their pre-intervention outcomes. 
While we provide such a truthful mechanism in~\Cref{sec:learning} under a gap assumption on principal rewards, 
deriving a truthful mechanism without any gap assumption appears challenging, as intervening based on a unit's pre-intervention outcomes provides an incentive for units to behave non-truthfully.
\khedit{
\paragraph{Robustness to assumptions and unknown parameters}
While our empirical results in~\Cref{sec:experiments} suggest that our methods are fairly robust to data which is not generated by a latent factor model as well as overestimates of the unit effort budget $\delta$, a more thorough theoretical robustness analysis is an important step towards deploying our methods in real-world panel data settings. 
}
\section*{Acknowledgements}
KH is supported in part by an NDSEG Fellowship. ZSW is supported in part by the NSF FAI Award \#1939606. The authors would like to thank the anonymous reviewers for valuable feedback and Hoda Heidari for helpful comments and suggestions in early stages of the project.
\newpage

\bibliographystyle{ACM-Reference-Format}
\bibliography{refs}

\newpage
\appendix
\section{Decision-Making using Synthetic Interventions is not Strategyproof}\label{sec:not-sp}
We begin by showing that assigning interventions to strategic units 
using the \emph{synthetic interventions} method is generally not strategyproof, even when there is only a single treatment and control. The setting we consider is that of~\Cref{fig:summary}, with $d \in \{0, 1\}$, principal reward equal to the sum of post-intervention outcomes, and intervention policy $\pi$ given by the following synthetic interventions procedure:
\begin{enumerate}
    \item For $d \in \{0, 1\}$, learn a linear model $\pv{\widehat{\boldsymbol{\omega}}}{d}_{i} \in \mathbb{R}^{\pv{n}{d}}$ between unit $i$ and $\pv{\cN}{d}$ for $i \in \{n+1, \ldots, n+m\}$ using, e.g., PCR, and the pre-treatment outcomes of unit $i$ and $\pv{\cN}{d}$.
    \item For $i \in \{n+1, \ldots, n+m\}$, estimate the reward of assigning intervention $d$ to unit $i$ as 
    \begin{equation*}
        \pv{\widehat{r}}{d}_{i} = \sum_{t=T_0 + 1}^T \sum_{j \in \pv{\cN}{d}} \pv{\widehat{\boldsymbol{\omega}}}{d}_{i}[j] \pv{y}{d}_{j,t}
    \end{equation*}
    \item Assign interventions $\{d_{i} := \pi(\Tilde{\vy}_{i, pre}) = \arg \max_{d \in \range{k}_0} \pv{\widehat{r}}{d}_{i}\}_{i=n+1}^{n+m}$
\end{enumerate}

\begin{figure}
    \centering
    \includegraphics[width=0.25\textwidth]{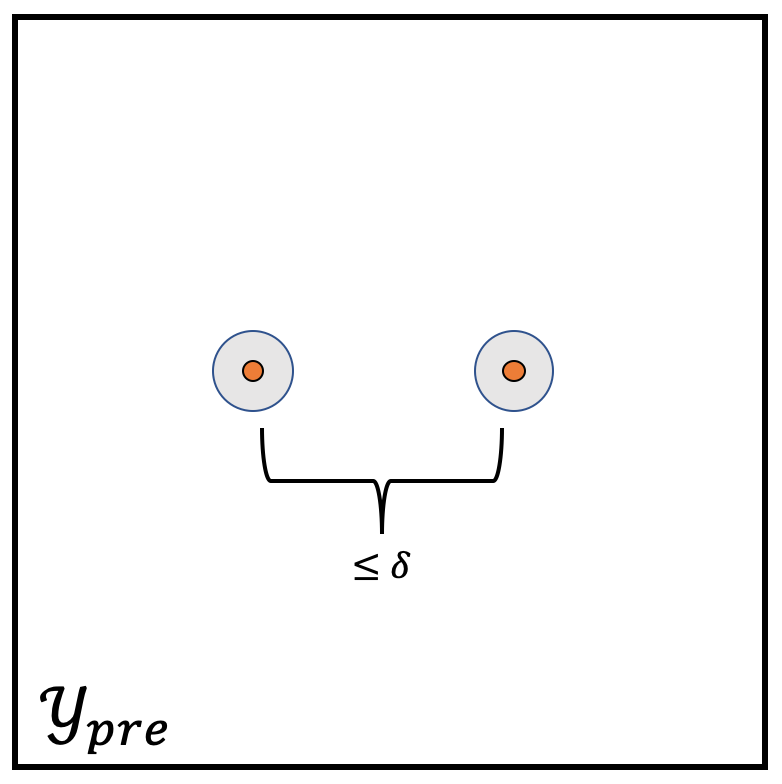}
    \caption{Illustration of~\Cref{ex:si}. The left and right orange dots are $\E[\pv{\vy}{0}_{pre}]$ and $\E[\pv{\vy}{1}_{pre}]$ respectively. Grey circles represent the variance in the observed pre-treatment outcomes due to measurement noise. If $\|\E[\pv{\vy}{1}_{pre}] - \E[\pv{\vy}{0}_{pre}]\|_2 \leq \delta$, a constant fraction of units with private type $\pv{\mathbf{v}}{0}$ will be able to modify their pre-treatment outcomes to look like units with private type $\pv{\mathbf{v}}{1}$.}
    \label{fig:si-example}
\end{figure}

\begin{example}\label{ex:si}
For simplicity, consider $d \in \{0, 1\}$. Suppose there are only two unit types: $\pv{\mathbf{v}}{0}$ and $\pv{\mathbf{v}}{1}$, where $\E[\pv{r}{0}_{i}] > \E[\pv{r}{1}_{i}]$ if $\mathbf{v}_{i} = \pv{\mathbf{v}}{0}$ and $\E[\pv{r}{1}_{i}] > \E[\pv{r}{0}_{i}]$ if $\mathbf{v}_{i} = \pv{\mathbf{v}}{1}$. Furthermore, suppose that all units prefer receiving the treatment to the control, the decision maker has access to data from a randomized control trial (RCT) with $\pv{\cN}{0} = \pv{\cN}{0} = n/2$, and that the distribution over units is such that $\mathbb{P}(\mathbf{v}_i = \pv{\mathbf{v}}{0}) = \mathbb{P}(\mathbf{v}_i = \pv{\mathbf{v}}{1}) = 1/2$, for $i \leq n$. Intuitively, all units with private type $\pv{\mathbf{v}}{0}$ will have expected pre-treatment outcomes $\E[\pv{\vy}{0}_{pre}]$ and units with private type $\pv{\mathbf{v}}{1}$ will have expected pre-treatment outcomes $\E[\pv{\vy}{1}_{pre}]$. See~\Cref{fig:si-example} for a visual depiction of such a setting. Now consider a decision maker who uses the RCT data to construct a synthetic interventions procedure, and uses it to assign interventions to $m$ new units. If $n$ is sufficiently large and the new units are not strategic, then under standard assumptions the above synthetic interventions procedure should assign the optimal interventions to new units with constant probability, as each sequence of pre-treatment outcomes $\vy_{i,pre}$ is a noisy observation of either $\E[\pv{\vy}{0}_{pre}]$ or $\E[\pv{\vy}{1}_{pre}]$. However, if units are strategic and $\|\E[\pv{\vy}{1}_{pre}] - \E[\pv{\vy}{0}_{pre}]\|_2 \leq \delta$, then new units with private type $\pv{\mathbf{v}}{0}$ will be able to modify their pre-treatment behavior according to~\Cref{ass:mod} to look like units with private type $\pv{\mathbf{v}}{1}$, which will cause the synthetic interventions procedure to assign intervention $d=1$ to those units with constant probability. Therefore with constant probability, 
\begin{equation*}
    \frac{1}{m} \sum_{i=n+1}^{n+m} \left(\E[\pv{r}{d_{i}}_{i}] - \E[\pv{r}{d_{i}^*}_{i}] \right)^2 = \Omega(1)
\end{equation*}
when interventions are assigned using synthetic interventions, even as $n, T, m \rightarrow \infty$.
\end{example}
\section{Proofs from Section~\ref{sec:characterizing}}
\begin{theorem}\label{ex:impossible-app}
    There exists an instance with three interventions such that \Cref{cond:nec-suf} is not satisfied. 
\end{theorem}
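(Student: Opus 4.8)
The plan is to exhibit an explicit instance in which a single unit of the most-preferred type has its entire best-response ball swallowed by the best-response balls of lower types, directly contradicting \Cref{cond:nec-suf}. I would work in $\cY_{pre} = \mathbb{R}^2$ (i.e.\ $T_0 = 2$) with $k = 3$ interventions ordered $2 \succ 1 \succ 0$, and use \Cref{prop:reward} to reduce the instance to a choice of reward coefficients, setting $\pv{\vbeta}{0} = (0,0)$, $\pv{\vbeta}{1} = (2,0)$, and $\pv{\vbeta}{2} = (1,\tfrac12)$ (these are realizable by an explicit latent factor model, which I would record for completeness). Each type region is the set of points maximizing the corresponding linear reward; in particular the type-$2$ region is the upward-opening wedge $\{\vy : \langle \pv{\vbeta}{2}-\pv{\vbeta}{0}, \vy\rangle \ge 0,\ \langle \pv{\vbeta}{2}-\pv{\vbeta}{1},\vy\rangle \ge 0\}$ with apex at the origin, while the type-$0$ and type-$1$ regions occupy the two sides (see \Cref{fig:bad-example}). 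I would take a continuum instance, placing a unit of type $d$ at every point of region $d$, so that $\Tilde{\cY}_{pre}(\pv{\cU}{d})$ is exactly the $\delta$-dilation of region $d$.

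The crux is then a single geometric computation. Writing $S := \{\vy : d(\vy, \mathrm{reg}\,0) > \delta \text{ and } d(\vy, \mathrm{reg}\,1) > \delta\}$ for the complement of $\Tilde{\cY}_{pre}(\pv{\cU}{0}) \cup \Tilde{\cY}_{pre}(\pv{\cU}{1})$, I would observe that near the axis $S$ is precisely the type-$2$ wedge eroded by $\delta$ on each side, namely $\{y_1 + \tfrac12 y_2 \ge \tfrac{\sqrt5}{2}\delta\} \cap \{-y_1 + \tfrac12 y_2 \ge \tfrac{\sqrt5}{2}\delta\}$, a congruent wedge whose apex has moved up the axis to $(0, \sqrt5\,\delta)$. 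The key inequality is $\sqrt5 > 1$: erosion pushes the apex up by strictly more than $\delta$, because the wedge has half-angle $\theta$ with $\sin\theta = 1/\sqrt5 < 1$ and the apex shift equals $\delta/\sin\theta$. I would then place a type-$2$ unit $i$ at $\vy_i = (0,\delta)$, which is genuinely type $2$ (both defining inner products equal $\tfrac12\delta > 0$, and $\langle \pv{\vbeta}{2}, \vy_i\rangle = \tfrac12\delta$ strictly exceeds $\langle \pv{\vbeta}{0}, \vy_i\rangle = \langle \pv{\vbeta}{1}, \vy_i\rangle = 0$), and verify that its distance to the convex set $S$ is attained at the apex, giving $d(\vy_i, S) = \sqrt5\,\delta - \delta = (\sqrt5-1)\delta > \delta$. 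Hence $B(\vy_i, \delta) \cap S = \emptyset$, which by the definition of $S$ means every point of $B(\vy_i,\delta)$ lies within $\delta$ of region $0$ or region $1$; since those regions are fully populated, $\Tilde{\cY}_{pre}(i) \subseteq \Tilde{\cY}_{pre}(\pv{\cU}{0}) \cup \Tilde{\cY}_{pre}(\pv{\cU}{1})$, exhibiting a unit violating \Cref{cond:nec-suf} for $d=2$.

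The main obstacle is making the containment rigorous rather than pictorial: the ``eroded wedge'' description of $S$ is only valid locally, so I would need to confirm that for all $z \in B(\vy_i,\delta)$ the distances to regions $0$ and $1$ coincide with the signed distances to the two boundary hyperplanes (so that $S$ really is the intersection of the two eroded half-planes on the relevant neighborhood), and that the normal cone of $S$ at its apex contains the downward axial direction $(0,-1)$, so the metric projection of $\vy_i$ onto $S$ is indeed the apex. Both reduce to elementary checks using the explicit normals $\pv{\vbeta}{2}-\pv{\vbeta}{0} = (1,\tfrac12)$ and $\pv{\vbeta}{2}-\pv{\vbeta}{1} = (-1,\tfrac12)$, but they are where the geometric content of the impossibility lives.
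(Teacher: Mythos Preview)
Your construction is correct and is essentially the paper's own example: the reward differences $\pv{\vbeta}{2}-\pv{\vbeta}{0}=(1,\tfrac12)$, $\pv{\vbeta}{2}-\pv{\vbeta}{1}=(-1,\tfrac12)$, $\pv{\vbeta}{1}-\pv{\vbeta}{0}=(2,0)$ coincide exactly with the paper's choices, so the three type regions and the wedge geometry are identical. The only substantive difference is in presentation: the paper argues by showing that no strategyproof policy can exist (any policy that correctly handles $\pv{\cU}{0}$ and $\pv{\cU}{1}$ must exclude intervention~$2$ from two shifted slabs, leaving the type-$2$ unit near the apex unable to obtain intervention~$2$), and then implicitly invokes \Cref{thm:impossible} to conclude the condition fails; you instead verify \Cref{cond:nec-suf} directly by exhibiting the containment $\Tilde{\cY}_{pre}(i) \subseteq \Tilde{\cY}_{pre}(\pv{\cU}{0})\cup\Tilde{\cY}_{pre}(\pv{\cU}{1})$. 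Your route is cleaner, and your identified obstacle (that the ``eroded wedge'' description of $S$ is only local) can be sidestepped entirely: for any $z\in B((0,\delta),\delta)$ with $z_1\ge 0$, the foot of the perpendicular from $z$ to the line $y_2=2y_1$ lands in region~$1$ and has distance $(z_2-2z_1)/\sqrt5\le 2\delta/\sqrt5<\delta$, and the case $z_1<0$ is symmetric---so no global description of $S$ is needed.
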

\begin{proof}
Suppose $d \in \{0, 1, 2\}$, $T_0 = 2$, and units prefer intervention $2$ over interventions $1$ and $0$, of which they are indifferent between. 
Suppose that
\begin{equation*}
\pv{\vbeta}{0} = [-1 \;\; 0.5]^\top, \;\;\; \pv{\vbeta}{1} = [1 \;\; 0.5]^\top, \;\;\; \pv{\vbeta}{2} = [0 \;\; 1]^\top, \text{ and } \;\; \vy_{i,pre} = \mathbf{v}_i.
\end{equation*}

\noindent Consider the following set of unit types: Let
\begin{equation*}
    \pv{\cU}{0} = \{\mathbf{v} \; : \; \langle \pv{\vbeta}{2} - \pv{\vbeta}{0}, \mathbf{v} \rangle = -\alpha, \; \mathbf{v}[1] < 0 \}, \;\;\; \pv{\cU}{1} = \{\mathbf{v} \; : \; \langle \pv{\vbeta}{2} - \pv{\vbeta}{1}, \mathbf{v} \rangle = -\alpha, \; \mathbf{v}[1] > 0 \},
\end{equation*}
and $\pv{\mathbf{v}}{2} = [0 \;\; \zeta]^\top$, where $\alpha, \zeta > 0$.
Such a setting is possible, e.g. when $\pv{\vu}{0}_1 = [1 \;\; 0]^\top$, $\pv{\vu}{0}_2 = [0 \;\; 1]^\top$, $\sum_{t=T_0+1}^T \pv{\vu}{0}_t = [-1 \;\; 0.5]^\top$, $\sum_{t=T_0+1}^T \pv{\vu}{1}_t = [1 \;\; 0.5]^\top$, $\sum_{t=T_0+1}^T \pv{\vu}{2}_t = [0 \;\; 1]^\top$, and $\omega_{T_0 + 1} = \omega_{T_0 + 2} = \cdots = \omega_{T} = 1$.
Observe that a necessary condition for correctly intervening on units in $\pv{\cU}{0}$ is that the intervention policy should not assign intervention $d=2$ to any units with pre-treatment outcomes $\vy_{pre}$ such that $\|\vy_{pre} - \mathbf{v}\|_2 \leq \delta$, where $\mathbf{v} \in \pv{\cU}{0}$. 
This is because such $\mathbf{v}$'s could best respond and get intervention $2$ instead of their type, which is $0$. An analogous necessary condition holds for units in $\pv{\cU}{1}$. 
By~\Cref{def:ball}, any intervention policy which correctly intervenes on unit $\mathbf{v}_i$ if $\mathbf{v}_i \in \pv{\cU}{0}$ or $\mathbf{v}_i \in \pv{\cU}{1}$ must not assign intervention $d=2$ if $\Tilde{\vy}_{i,pre}$ is such that\looseness-1
\begin{equation*}
\begin{aligned}
\langle \pv{\vbeta}{2} - \pv{\vbeta}{1}, \Tilde{\vy}_{i,pre} \rangle &\in [-\alpha - \delta \|\pv{\vbeta}{2} - \pv{\vbeta}{1}\|_2, -\alpha + \delta \|\pv{\vbeta}{2} - \pv{\vbeta}{1}\|_2]\\
\text{or }
\langle \pv{\vbeta}{2} - \pv{\vbeta}{0}, \Tilde{\vy}_{i,pre} \rangle &\in [-\alpha - \delta \|\pv{\vbeta}{2} - \pv{\vbeta}{0}\|_2, -\alpha + \delta \|\pv{\vbeta}{2} - \pv{\vbeta}{0}\|_2].\\
\end{aligned}
\end{equation*}

\noindent However, if this condition is satisfied, it will be impossible to correctly intervene on unit $\mathbf{v}_i$ if $\mathbf{v}_i = \pv{\mathbf{v}}{2}$ and $\alpha, \zeta$ are small enough. 
To see this, note that in order for both 
\begin{equation*}
\begin{aligned}
\langle \pv{\vbeta}{2} - \pv{\vbeta}{1}, \Tilde{\vy}_{i,pre} \rangle &> \delta \|\pv{\vbeta}{2} - \pv{\vbeta}{1}\|_2  -\alpha\\
\text{ and } \;\; \langle \pv{\vbeta}{2} - \pv{\vbeta}{0}, \Tilde{\vy}_{i,pre} \rangle &> \delta \|\pv{\vbeta}{2} - \pv{\vbeta}{0}\|_2 -\alpha\\
\end{aligned}
\end{equation*}
to hold,
\begin{equation*}
\begin{aligned}
\delta \|\pv{\vbeta}{2} - \pv{\vbeta}{1}\|_2 - \alpha &< (\pv{\vbeta}{2} - \pv{\vbeta}{1})[2](\zeta + \delta)\\
\text{ and } \;\; \delta \|\pv{\vbeta}{2} - \pv{\vbeta}{0}\|_2 - \alpha &< (\pv{\vbeta}{2} - \pv{\vbeta}{0})[2](\zeta + \delta).
\end{aligned}
\end{equation*}
This implies that intervening perfectly on all units is not possible unless $\frac{1}{2} \zeta + \alpha > \delta(\sqrt{1.25} - 0.5)$, which does not hold for sufficiently small $\alpha$, $\zeta$.
In other words, the condition on $\alpha$, $\zeta$ implies that if the pre-intervention outcomes of units of different types are sufficiently close, intervening perfectly on these units is generally not possible.
\end{proof}

\section{Proofs from Section~\ref{sec:learning}}\label{app:learning}
\begin{theorem}\label{cor:two-learn-app}
    Suppose $d \in \{0, 1\}$.~\Cref{alg:sp-two-action} achieves out-of-sample performance
    \begin{equation}\label{eq:two-learn-app}
        \E [\pv{r}{d_{i}^A}_{i}] - \E [\pv{r}{d_{i}^*}_{i} ] \leq |\pv{\widehat{r}}{0}_{i} - \E [\pv{r}{0}_{i}]| + |\pv{\widehat{r}}{1}_{i} - \E [\pv{r}{1}_{i}]|
    \end{equation}
    for any test unit $i$, where $d_i^A$ is the intervention assigned to unit $i$ by~\Cref{alg:sp-two-action}, $d_i^*$ is the optimal intervention to assign to unit $i$, and $\pv{\widehat{r}}{d}_{i} := \langle \pv{\widehat{\vbeta}}{d}, \vy_{i,pre} \rangle$ is the estimated principal reward under intervention $d$.
\end{theorem}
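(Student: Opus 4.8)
The plan is to reduce the analysis of the strategic policy to that of an ordinary (non-strategic) plug-in classifier, as hinted in the main text, and then close with a short $\arg\max$ argument. First I would show that, against the policy of \Cref{alg:sp-two-action}, a unit with true outcomes $\vy_{i,pre}$ receives intervention $1$ if and only if $\langle \widehat{\vbeta}^{(1)} - \widehat{\vbeta}^{(0)}, \vy_{i,pre}\rangle > 0$. Since every unit prefers intervention $1$ to intervention $0$ (\Cref{ass:mod}), when best-responding it maximizes $\langle \widehat{\vbeta}^{(1)} - \widehat{\vbeta}^{(0)}, \widehat{\vy}\rangle$ over the ball $\|\widehat{\vy} - \vy_{i,pre}\|_2 \le \delta$; the maximizer is $\vy_{i,pre} + \delta(\widehat{\vbeta}^{(1)} - \widehat{\vbeta}^{(0)})/\|\widehat{\vbeta}^{(1)} - \widehat{\vbeta}^{(0)}\|_2$ and the maximal value is $\langle \widehat{\vbeta}^{(1)} - \widehat{\vbeta}^{(0)}, \vy_{i,pre}\rangle + \delta\|\widehat{\vbeta}^{(1)} - \widehat{\vbeta}^{(0)}\|_2$. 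Substituting this into the threshold rule of \Cref{alg:sp-two-action}, the $\delta$-shift of the decision boundary exactly cancels the $\delta$-norm term, so the unit obtains intervention $1$ precisely when $\langle \widehat{\vbeta}^{(1)} - \widehat{\vbeta}^{(0)}, \vy_{i,pre}\rangle > 0$. Equivalently, writing $\widehat{r}_i^{(d)} = \langle \widehat{\vbeta}^{(d)}, \vy_{i,pre}\rangle$, the assignment satisfies $d_i^A \in \arg\max_{d \in \{0,1\}} \widehat{r}_i^{(d)}$, so \Cref{alg:sp-two-action} acting on strategic units behaves exactly like the non-strategic plug-in rule~\eqref{eq:not-sp-app} acting on the \emph{unmodified} outcomes.

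Given this equivalence, the bound follows from routine algebra on a single fixed test unit. If $d_i^A = d_i^*$ the suboptimality gap $\E[r_i^{(d_i^*)}] - \E[r_i^{(d_i^A)}]$ is zero and the claim is immediate. Otherwise $\{d_i^A, d_i^*\} = \{0,1\}$, and I would add and subtract the estimated rewards:
\[
\E[r_i^{(d_i^*)}] - \E[r_i^{(d_i^A)}] = \big(\E[r_i^{(d_i^*)}] - \widehat{r}_i^{(d_i^*)}\big) + \big(\widehat{r}_i^{(d_i^*)} - \widehat{r}_i^{(d_i^A)}\big) + \big(\widehat{r}_i^{(d_i^A)} - \E[r_i^{(d_i^A)}]\big).
\]
The middle term is nonpositive because $d_i^A$ maximizes $\widehat{r}_i^{(\cdot)}$ by the first step, and bounding the two outer terms by their absolute values while using $\{d_i^A,d_i^*\} = \{0,1\}$ yields $\E[r_i^{(d_i^*)}] - \E[r_i^{(d_i^A)}] \le |\widehat{r}_i^{(0)} - \E[r_i^{(0)}]| + |\widehat{r}_i^{(1)} - \E[r_i^{(1)}]|$, which is exactly the stated right-hand side. (The displayed left-hand side $\E[r_i^{(d_i^A)}] - \E[r_i^{(d_i^*)}]$ is the negative of this suboptimality gap, so the theorem is really the bound on the gap just derived.)

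The main obstacle is the reduction in the first step: one must invoke the homogeneous preference ordering together with the minimal-effort tie-breaking convention of \Cref{ass:mod} to pin down the best response uniquely, to handle units whose unmodified outcomes already lie on the ``intervention $1$'' side, and to treat the boundary case $\langle \widehat{\vbeta}^{(1)} - \widehat{\vbeta}^{(0)}, \vy_{i,pre}\rangle = 0$ consistently (assigning $0$). Once this exact cancellation of the $\delta$ terms is established, everything collapses to a standard plug-in $\arg\max$ estimate, and no probabilistic or latent-factor-model structure is needed for the statement itself. The downstream finite-sample rates then come from separately controlling $|\widehat{r}_i^{(d)} - \E[r_i^{(d)}]|$ via the PCR guarantees, as carried out in \Cref{thm:out-of-sample}.
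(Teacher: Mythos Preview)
Your proposal is correct and mirrors the paper's proof: the paper likewise first establishes (as \Cref{prop:not-sp-app}) that \Cref{alg:sp-two-action} on strategic units coincides with the non-strategic plug-in rule $d_i^B = \mathbbm{1}\{\widehat r_i^{(1)} > \widehat r_i^{(0)}\}$ on the unmodified outcomes, via exactly the $\delta$-shift cancellation you describe, and then finishes with the same $\arg\max$ chain $\E[r_i^{(d_i^*)}] - |\cdot| \le \widehat r_i^{(d_i^*)} \le \widehat r_i^{(d_i^A)} \le \E[r_i^{(d_i^A)}] + |\cdot|$. Your observation about the sign of the displayed left-hand side is also consistent with how the paper actually derives the inequality.
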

The proof of~\Cref{cor:two-learn-app} relies on the following proposition, which shows that the interventions assigned by the intervention policy of~\Cref{alg:sp-two-action} on strategic units match the interventions assigned according to the following intervention policy on units which are always truthful. We say that a unit is truthful if they do not modify their pre-intervention outcomes.

\begin{lemma}\label{prop:not-sp-app}
    Consider the following intervention policy:
    \begin{equation}\label{eq:not-sp-app}
        d_{i}^{B} = 
        \begin{cases}
            1 \; &\text{if } \; \; \pv{\widehat{r}}{1}_i - \pv{\widehat{r}}{0}_i > 0\\
            0 \; &\text{otherwise},
        \end{cases}
    \end{equation}
    where $\pv{\widehat{\vbeta}}{0}, \pv{\widehat{\vbeta}}{1}$ are defined as in~\Cref{alg:sp-two-action}. (Recall that $\pv{\widehat{r}}{d}_{i} := \langle \pv{\widehat{\vbeta}}{d}, \vy_{i,pre} \rangle$.)
    The intervention policy of~\Cref{alg:sp-two-action} assigns the same interventions to strategic units that intervention policy (\ref{eq:not-sp-app}) assigns to truthful units.\looseness-1
\end{lemma}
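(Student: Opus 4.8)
The plan is to prove the claim unit-by-unit by computing what a strategic unit's best response actually achieves under the shifted-boundary rule of Algorithm~\ref{alg:sp-two-action}, and comparing it to the truthful classification (\ref{eq:not-sp-app}). Write $\boldsymbol{\eta} := \pv{\widehat{\vbeta}}{1} - \pv{\widehat{\vbeta}}{0}$, so that Algorithm~\ref{alg:sp-two-action} assigns intervention $1$ to a report $\Tilde{\vy}_{i,pre}$ exactly when $\langle \boldsymbol{\eta}, \Tilde{\vy}_{i,pre}\rangle > \delta \|\boldsymbol{\eta}\|_2$, and intervention $0$ otherwise. Since interventions are ordered in increasing unit preference (Definition~\ref{ass:mod}), every unit prefers intervention $1$ to intervention $0$, so a strategic unit receives intervention $1$ under Algorithm~\ref{alg:sp-two-action} if and only if some feasible report in the $\delta$-ball around $\vy_{i,pre}$ clears this threshold.

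First I would identify the most favorable feasible report. By Cauchy--Schwarz, $\max_{\|\Tilde{\vy} - \vy_{i,pre}\|_2 \leq \delta} \langle \boldsymbol{\eta}, \Tilde{\vy}\rangle = \langle \boldsymbol{\eta}, \vy_{i,pre}\rangle + \delta \|\boldsymbol{\eta}\|_2$, attained by moving in the direction $\boldsymbol{\eta}/\|\boldsymbol{\eta}\|_2$. Hence intervention $1$ is achievable under Algorithm~\ref{alg:sp-two-action} if and only if $\langle \boldsymbol{\eta}, \vy_{i,pre}\rangle + \delta \|\boldsymbol{\eta}\|_2 > \delta \|\boldsymbol{\eta}\|_2$, i.e. if and only if $\langle \boldsymbol{\eta}, \vy_{i,pre}\rangle > 0$. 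Because $\langle \boldsymbol{\eta}, \vy_{i,pre}\rangle = \pv{\widehat{r}}{1}_{i} - \pv{\widehat{r}}{0}_{i}$ evaluated at the \emph{unmodified} outcomes, this is precisely the condition under which intervention policy (\ref{eq:not-sp-app}) assigns intervention $1$ to a truthful unit. This settles the case where the shifted-boundary policy assigns $1$.

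For the complementary case, I would note that when $\langle \boldsymbol{\eta}, \vy_{i,pre}\rangle \leq 0$ the displayed maximum is at most $\delta \|\boldsymbol{\eta}\|_2$, so no feasible report clears the strict threshold; the strategic unit thus cannot obtain intervention $1$ and is assigned $0$ by Algorithm~\ref{alg:sp-two-action}, which matches $d_i^B = 0$. In both cases the intervention assigned by Algorithm~\ref{alg:sp-two-action} to the strategic unit coincides with that assigned by (\ref{eq:not-sp-app}) to the corresponding truthful unit, which is exactly the assertion.

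The only delicate points are bookkeeping rather than genuine obstacles. I must be consistent about the strict inequality in the threshold so that a boundary outcome with $\langle \boldsymbol{\eta}, \vy_{i,pre}\rangle = 0$ yields intervention $0$ under both rules, and I should remark that the indifference tie-breaking convention of Definition~\ref{ass:mod} pins down the actual report in the $d_i^B = 0$ case (the unit stays put) but is immaterial to the assigned intervention. No further subtleties arise, since the strategic best response collapses to a one-dimensional projection onto $\boldsymbol{\eta}$.
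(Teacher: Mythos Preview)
Your argument is correct and follows essentially the same approach as the paper's proof: both proceed case-by-case, showing via the direction $\boldsymbol{\eta}/\|\boldsymbol{\eta}\|_2$ that the strategic unit can clear the shifted threshold under Algorithm~\ref{alg:sp-two-action} exactly when $\langle \boldsymbol{\eta}, \vy_{i,pre}\rangle > 0$. Your use of Cauchy--Schwarz to compute the maximum directly is a slightly cleaner packaging of the same computation the paper carries out by constructing the explicit modification.
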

\begin{proof}
    The proof proceeds on a case-by-case basis. Fix a (strategic) unit $i \in \{n+1, \dots, n+m\}$.
    
    \noindent\textbf{Case 1}: Suppose intervention policy (\ref{eq:not-sp-app}) assigns intervention $d_{i}^B = 1$ to unit $i$. Since $d_{i}^B = 1$, $\langle \pv{\widehat{\vbeta}}{1} - \pv{\widehat{\vbeta}}{0}, \vy_{i,pre} \rangle > 0$. If Algorithm~\ref{alg:sp-two-action} assigns intervention $1$ to unit $i$ without any modification to their pre-treatment outcome, then the claim holds trivially. One valid modification is:
    $$\Tilde{\vy}_{i,pre} = \vy_{i,pre} + \delta \frac{\pv{\widehat{\vbeta}}{1} - \pv{\widehat{\vbeta}}{0}}{\left \|\pv{\widehat{\vbeta}}{1} - \pv{\widehat{\vbeta}}{0} \right\|_2}.$$ Supposing unit $i$ modifies to $\Tilde{\vy}_{i,pre}$,
    $$\langle \pv{\widehat{\vbeta}}{1} - \pv{\widehat{\vbeta}}{0}, \Tilde{\vy}_{i,pre} \rangle = \langle \pv{\widehat{\vbeta}}{1} - \pv{\widehat{\vbeta}}{0}, \vy_{i,pre} \rangle + \delta \left \langle \pv{\widehat{\vbeta}}{1} - \pv{\widehat{\vbeta}}{0}, \frac{\pv{\widehat{\vbeta}}{1} - \pv{\widehat{\vbeta}}{0}}{\left \|\pv{\widehat{\vbeta}}{1} - \pv{\widehat{\vbeta}}{0} \right\|_2} \right \rangle > \delta \| \pv{\widehat{\vbeta}}{1} - \pv{\widehat{\vbeta}}{0}\|_2.$$
    Therefore, unit $i$ can receive intervention $d_{i}^A=1$ under the intervention policy of~\Cref{alg:sp-two-action}. 
        
    \noindent\textbf{Case 2:} Suppose intervention policy (\ref{eq:not-sp-app}) assigns intervention $d_{i}^B = 0$ to unit $i$, i.e., $\langle \pv{\widehat{\vbeta}}{1} - \pv{\widehat{\vbeta}}{0}, \vy_{pre} \rangle \leq 0$. To receive intervention $d_i^A = 1$ by the policy of \Cref{alg:sp-two-action}, it needs to be the case that   
    $\langle \pv{\widehat{\vbeta}}{1} - \pv{\widehat{\vbeta}}{0}, \Tilde{\vy}_{i,pre} \rangle - \delta \| \pv{\widehat{\vbeta}}{1} - \pv{\widehat{\vbeta}}{0} \|_2 > 0$.
    
    However according to~\Cref{ass:mod}, the most a unit can manipulate their pre-treatment outcomes by is $\delta$, so 
    $$\langle \pv{\widehat{\vbeta}}{1} - \pv{\widehat{\vbeta}}{0}, \Tilde{\vy}_{i,pre} \rangle \leq \left \langle \pv{\widehat{\vbeta}}{1} - \pv{\widehat{\vbeta}}{0}, \vy_{i,pre} + \frac{\delta(\pv{\widehat{\vbeta}}{1} - \pv{\widehat{\vbeta}}{0})}{\|\pv{\widehat{\vbeta}}{1} - \pv{\widehat{\vbeta}}{0}\|_2} \right \rangle \leq \delta \|\pv{\widehat{\vbeta}}{1} - \pv{\widehat{\vbeta}}{0}\|_2,$$ where for the last inequality we have used the fact that $\langle \pv{\widehat{\vbeta}}{1} - \pv{\widehat{\vbeta}}{0}, \vy_{pre} \rangle \leq 0.$ Therefore, no valid strategic modification to unit $i$'s pre-treatment outcomes exists for which the intervention policy of~\Cref{alg:sp-two-action} assigns intervention $d_i^A = 1$ to unit $i$.
\end{proof}
Since the performance of the intervention policy of~\Cref{alg:sp-two-action} on strategic units matches that of intervention policy (\ref{eq:not-sp-app}) on truthful units, we can analyze the performance of intervention policy (\ref{eq:not-sp-app}) on truthful units without any loss of generality. 
The analysis of the performance of intervention policy (\ref{eq:not-sp-app}) on truthful units completes the proof of~\Cref{cor:two-learn}.
If $d_{i}^A = d_{i}^*$, then $\E[\pv{r}{d_{i}}_{i}] - \E[\pv{r}{d_{i}^*}_{i}] = 0$. If $d_{i}^A \neq d_{i}^*$, we know that
\begin{equation*}
    \E[\pv{r}{d_{i}^*}_i] - |\pv{\widehat{r}}{d_{i}^*}_i - \E[\pv{r}{d_{i}^*}_{i}]| \leq \pv{\widehat{r}}{d_{i}^*}_{i} \leq \pv{\widehat{r}}{d_{i}^A}_{i} \leq \E[\pv{r}{d_{i}^A}_{i}] + |\pv{\widehat{r}}{d_{i}^A}_{i} - \E[\pv{r}{d_{i}^A}_{i}]|.
\end{equation*}
Therefore,
\begin{equation*}
\begin{aligned}
    \E[\pv{r}{d_{i}^A}_{i}] - \E[\pv{r}{d_{i}^*}_{i}] &\leq |\pv{\widehat{r}}{d_{i}^A}_{i} - \E[\pv{r}{d_{i}^A}_{i}]| + |\pv{\widehat{r}}{d_{i}^*}_{i} - \E[\pv{r}{d_{i}^*}_{i}]|\\
    &\leq |\pv{\widehat{r}}{0}_{i} - \E[\pv{r}{0}_{i}]| + |\pv{\widehat{r}}{1}_{i} - \E[\pv{r}{1}_{i}]|\\
\end{aligned}
\end{equation*}
%
\begin{assumption}[Expected Reward Gap]\label{ass:margin-app}
Suppose that for each unit type $d$, $\pv{\vbeta}{d}, \pv{\widehat \vbeta}{d} \in [-\Bar{\beta}, \Bar{\beta}]^{T_0}$ for $\Bar{\beta} \in \mathbb{R}_+$ and there is a gap in the principal's expected reward between assigning units their type and assigning them any other intervention. Formally, for some $\alpha > 0$ (specified in~\Cref{cor:multi-app}) for each unit type $d \in \range{k}_0$: 
 $\forall v \in \pv{\cV}{d}, \;\;\; \E [\pv{r}{d}_{v} ] - \E [\pv{r}{d'}_v] > \pv{\gamma}{d,d'}$ for all $d' < d$, where $$\pv{\gamma}{d,d'} := (\sqrt{T_0} + \delta) (\| \pv{\vbeta}{d} - \pv{\widehat{\vbeta}}{d} \|_2 + \| \pv{\vbeta}{d'} - \pv{\widehat{\vbeta}}{d'} \|_2) + \delta \| \pv{\vbeta}{d} - \pv{\vbeta}{d'} \|_2 + 6 \sigma \Bar{\beta} \sqrt{2 T_0 \log(1/\alpha)},$$
and $\delta$, $\sigma$ are defined as in~\Cref{ass:mod} and~\Cref{ass:lfm-r} respectively.
\end{assumption}
The gap in~\Cref{ass:margin-app} depends on three terms: one which goes to zero as $\{\pv{\widehat{\vbeta}}{d}\}_{d=0}^{k-1} \rightarrow \{\pv{\vbeta}{d}\}_{d=0}^{k-1}$, one which is proportional to the maximum amount of modification possible in the pre-treatment period, and one which is proportional to the amount of measurement noise.
Note that under~\Cref{ass:margin-app}, separation of types (\Cref{cond:nec-suf}) holds by design (in expectation).
Intuitively, such a gap between unit rewards allows the principal to incentivize \emph{truthful} unit behavior, as it is possible to design an intervention policy such that no unit has an incentive to modify their pre-intervention outcomes. When units are truthful, linear intervention policies are optimal due to~\Cref{prop:reward}.
\begin{corollary}\label{cor:multi-app} 
    Suppose the principal's expected rewards satisfy the gap assumption (\Cref{ass:margin-app}). Then, for any $\alpha > 0 $, with probability at least $1 - \alpha$,~\Cref{alg:learning-sp} achieves out-of-sample performance\looseness-1
    \begin{equation*}
        \E[\pv{r}{d_{i}^A}_{i}] - \E[\pv{r}{d_{i}^*}_{i}] \leq \sum_{d=1}^{k} |\pv{\widehat{r}}{d}_{i} - \E[\pv{r}{d}_{i}]|,
    \end{equation*}
    for any test unit $i$, where $d_i^A$ is the intervention assigned to unit $i$ by~\Cref{alg:learning-sp}, $d_i^*$ is the optimal intervention to assign to (strategic) unit $i$, and $\pv{\widehat{r}}{d}_{i} := \langle \pv{\widehat{\vbeta}}{d}, \vy_{i,pre} \rangle$ is the estimated principal reward under intervention $d$.
\end{corollary}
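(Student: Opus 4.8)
The plan is to mirror the two-step structure of the proof of \Cref{cor:two-learn-app}, replacing its \emph{unconditional} decoupling lemma (\Cref{prop:not-sp-app}) with one that holds only under the gap assumption. First I would establish a multi-intervention analogue of \Cref{prop:not-sp-app}: under \Cref{ass:margin-app}, with probability at least $1-\alpha$ over the measurement noise in the test unit's pre-intervention outcomes, \Cref{alg:learning-sp} assigns each strategic unit $i$ exactly the intervention $d^\star := \arg\max_{d \in \range{k}_0} \pv{\widehat{r}}{d}_i$ that the plug-in rule assigns to that unit's \emph{unmodified} outcome $\vy_{i,pre}$. Once this equivalence is in hand, the strategic problem reduces to analyzing a plug-in argmax policy acting on truthful units, and the remainder of the argument is a deterministic sandwich identical to the two-intervention case.

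For the decoupling lemma I would verify two facts about the half-space constraints defining \Cref{alg:learning-sp}. (i) \emph{No upward leakage}: for any more-preferred $d > d^\star$, the first-family constraint $\langle \pv{\widehat{\vbeta}}{d} - \pv{\widehat{\vbeta}}{d^\star}, \Tilde{\vy}_{i,pre} \rangle > \delta \| \pv{\widehat{\vbeta}}{d} - \pv{\widehat{\vbeta}}{d^\star} \|_2$ cannot be met, since at the truthful point $\langle \pv{\widehat{\vbeta}}{d} - \pv{\widehat{\vbeta}}{d^\star}, \vy_{i,pre} \rangle \leq 0$ by the argmax property and a $\delta$-modification raises this inner product by at most $\delta \| \pv{\widehat{\vbeta}}{d} - \pv{\widehat{\vbeta}}{d^\star} \|_2$; this part requires no gap and mirrors Case~2 of \Cref{prop:not-sp-app}. (ii) \emph{Reachability}: some $\Tilde{\vy}_{i,pre}$ in the $\delta$-ball simultaneously satisfies the first family for all $d' < d^\star$ and the second family for all $d' > d^\star$, so the unit can secure $d^\star$; this is exactly where \Cref{ass:margin-app} is essential, as it guarantees the realized truthful point sits far enough inside its own region (after the boundary shifts) for such a point to exist, mirroring Case~1. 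The probability $1-\alpha$ enters here: the $6 \sigma \Bar{\beta} \sqrt{2 T_0 \log(1/\alpha)}$ summand in the gap $\pv{\gamma}{d,d'}$ is precisely what a sub-Gaussian tail bound on $\langle \pv{\vbeta}{d}, \boldsymbol{\varepsilon}_i \rangle$, union-bounded over the $k$ interventions, needs so that the realized reward margins keep the correct sign despite the noise in $\vy_{i,pre}$; the $(\sqrt{T_0}+\delta)\| \pv{\vbeta}{d} - \pv{\widehat{\vbeta}}{d}\|_2$ and $\delta\|\pv{\vbeta}{d} - \pv{\vbeta}{d'}\|_2$ terms absorb the estimation error and the boundary shift, respectively.

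Conditional on the decoupling event, I would finish exactly as in \Cref{cor:two-learn-app}. If $d_i^A = d_i^*$ the left-hand side vanishes and there is nothing to prove. Otherwise the argmax property gives $\pv{\widehat{r}}{d_i^*}_i \leq \pv{\widehat{r}}{d_i^A}_i$, so that
\begin{equation*}
    \E[\pv{r}{d_{i}^*}_i] - |\pv{\widehat{r}}{d_{i}^*}_i - \E[\pv{r}{d_{i}^*}_{i}]| \leq \pv{\widehat{r}}{d_{i}^*}_{i} \leq \pv{\widehat{r}}{d_{i}^A}_{i} \leq \E[\pv{r}{d_{i}^A}_{i}] + |\pv{\widehat{r}}{d_{i}^A}_{i} - \E[\pv{r}{d_{i}^A}_{i}]|,
\end{equation*}
which rearranges to a bound by $|\pv{\widehat{r}}{d_i^A}_i - \E[\pv{r}{d_i^A}_i]| + |\pv{\widehat{r}}{d_i^*}_i - \E[\pv{r}{d_i^*}_i]|$. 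I would then crudely upper-bound these two terms by the full sum $\sum_{d=0}^{k-1} |\pv{\widehat{r}}{d}_i - \E[\pv{r}{d}_i]|$ over all interventions, yielding the stated bound with probability at least $1-\alpha$.

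The main obstacle is the decoupling lemma of Step~1. In the two-intervention case the equivalence held unconditionally because a unit's best response is effectively one-dimensional — it either does or does not reach for the single treatment. With $k \geq 3$ interventions the unit may profitably strategize toward any of several more-preferred interventions, so \emph{reachability} of its own region must be secured while ruling out leakage into every higher region at once; checking that $\pv{\gamma}{d,d'}$ is large enough to make all these shifted half-space inequalities consistent simultaneously, and that the noise concentration holds uniformly over interventions with probability $1-\alpha$, is the delicate part. This is also why \Cref{ass:margin-app} makes separation of types (\Cref{cond:nec-suf}) hold in expectation, which is exactly what underlies the existence of a feasible $\Tilde{\vy}_{i,pre}$.
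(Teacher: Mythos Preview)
Your proposal is correct and follows essentially the same approach as the paper. The paper likewise proves a decoupling lemma (\Cref{prop:gap}) showing that \Cref{alg:learning-sp} on strategic units coincides with the truthful plug-in argmax rule with probability $1-\alpha$, and then applies the identical sandwich argument you describe; the only cosmetic difference is that the paper explicitly uses the truthful point $\Tilde{\vy}_{i,pre} = \vy_{i,pre}$ itself as the reachability witness (thereby also establishing truthfulness as a weakly dominant strategy), whereas you phrase it as ``some $\Tilde{\vy}_{i,pre}$ in the $\delta$-ball,'' but your parenthetical about the realized truthful point sitting inside its region shows you have the same witness in mind.
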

The proof of~\Cref{cor:multi-app} proceeds analogously to that of~\Cref{cor:two-learn}. We begin by showing that the interventions assigned by the intervention policy of~\Cref{alg:learning-sp} on strategic units match the interventions assigned according to the intervention policy in the following lemma. However, unlike in the proof of~\Cref{cor:two-learn}, we also show that behaving truthfully in the pre-intervention period is a (weakly) dominant strategy for each unit under~\Cref{ass:margin-app}.
\begin{lemma}\label{prop:gap}
Consider the setting of~\Cref{cor:multi-app} and the following intervention policy: 
    
    \noindent Assign intervention $d_i^B = d$ to unit $i$ if
    \begin{equation}\label{eq:truthful}
    \begin{aligned}
        \pv{\widehat{r}}{d}_{i} - \pv{\widehat{r}}{d'}_{i} > 0 \; \text{ for all } \; d' < d \;\;\;
        \text {and } \;\;\; \pv{\widehat{r}}{d}_{i} - \pv{\widehat{r}}{d'}_{i} \geq 0 \; \text{ for all } \; d' > d,\\
    \end{aligned}
\end{equation}
where $\{ \pv{\widehat{\vbeta}}{d} \}_{d \in \range{k}_0}$ are defined as in~\Cref{alg:learning-sp}.
For any $\alpha>0$, the intervention policy of~\Cref{alg:learning-sp} assigns the same interventions to strategic units that intervention policy (\ref{eq:truthful}) assigns to truthful units with probability at least $1 - \alpha$.
\end{lemma}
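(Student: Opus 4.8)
The plan is to mirror the two-step structure of the two-intervention argument (\Cref{prop:not-sp-app}). Write $d^A_i$ for the intervention a strategic unit $i$ obtains by best-responding under \Cref{alg:learning-sp}, and $\bar{d}_i$ for the intervention that policy~(\ref{eq:truthful}) assigns to the same unit when it reports truthfully. Since units prefer higher-indexed interventions, the best response lands on the largest reachable intervention, so it suffices to prove (i) no intervention strictly above $\bar{d}_i$ is reachable, and (ii) $\bar{d}_i$ itself is reachable; together these give $d^A_i = \bar{d}_i$. Step (i) will be purely deterministic, while step (ii) is where \Cref{ass:margin-app} and the probability $1-\alpha$ enter, and it is also where I obtain weak dominance of truthful reporting.

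For (i), fix any target $d'' > \bar{d}_i$ and write a candidate report as $\Tilde{\vy}_{i,pre} = \vy_{i,pre} + \Delta$ with $\|\Delta\|_2 \le \delta$. A necessary condition for \Cref{alg:learning-sp} to assign $d''$ is its strict-margin inequality against the lower index $\bar{d}_i$, namely $\langle \pv{\widehat{\vbeta}}{d''} - \pv{\widehat{\vbeta}}{\bar{d}_i}, \Tilde{\vy}_{i,pre}\rangle > \delta \|\pv{\widehat{\vbeta}}{d''} - \pv{\widehat{\vbeta}}{\bar{d}_i}\|_2$. By Cauchy--Schwarz, $\langle \pv{\widehat{\vbeta}}{d''} - \pv{\widehat{\vbeta}}{\bar{d}_i}, \Delta\rangle \le \delta \|\pv{\widehat{\vbeta}}{d''} - \pv{\widehat{\vbeta}}{\bar{d}_i}\|_2$, so the condition forces $\langle \pv{\widehat{\vbeta}}{d''} - \pv{\widehat{\vbeta}}{\bar{d}_i}, \vy_{i,pre}\rangle > 0$, i.e. $\pv{\widehat{r}}{d''}_{i} > \pv{\widehat{r}}{\bar{d}_i}_{i}$. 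But $d'' > \bar{d}_i$ together with the definition of policy~(\ref{eq:truthful}) gives $\pv{\widehat{r}}{\bar{d}_i}_{i} \ge \pv{\widehat{r}}{d''}_{i}$, a contradiction. Hence no $d'' > \bar{d}_i$ is reachable, so $d^A_i \le \bar{d}_i$ with no appeal to the gap or to randomness.

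For (ii), I would prove the stronger claim that the unmodified report $\Tilde{\vy}_{i,pre} = \vy_{i,pre}$ already induces the assignment $\bar{d}_i$, which simultaneously shows truthful reporting is weakly dominant. The conditions against higher indices $d' > \bar{d}_i$ hold automatically: since policy~(\ref{eq:truthful}) assigns $\bar{d}_i$, we have $\pv{\widehat{r}}{\bar{d}_i}_{i} - \pv{\widehat{r}}{d'}_{i} \ge 0 \ge -\delta\|\pv{\widehat{\vbeta}}{\bar{d}_i} - \pv{\widehat{\vbeta}}{d'}\|_2$. The real work is the strict-margin conditions against lower indices $d' < \bar{d}_i$, which demand $\pv{\widehat{r}}{\bar{d}_i}_{i} - \pv{\widehat{r}}{d'}_{i} > \delta\|\pv{\widehat{\vbeta}}{\bar{d}_i} - \pv{\widehat{\vbeta}}{d'}\|_2$. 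Identifying $\bar{d}_i$ with the unit's type $d$ (justified on the $1-\alpha$ event discussed below), I would decompose $\pv{\widehat{r}}{d}_{i} - \pv{\widehat{r}}{d'}_{i} = \langle \pv{\widehat{\vbeta}}{d} - \pv{\widehat{\vbeta}}{d'}, \vy_{i,pre}\rangle$ into the expected-reward gap $\E[\pv{r}{d}_{i}] - \E[\pv{r}{d'}_{i}]$, an estimation-error term controlled by $\|\pv{\vbeta}{d} - \pv{\widehat{\vbeta}}{d}\|_2 + \|\pv{\vbeta}{d'} - \pv{\widehat{\vbeta}}{d'}\|_2$ via Cauchy--Schwarz and $\|\E[\vy_{i,pre}]\|_2 \le \sqrt{T_0}$, and a mean-zero sub-Gaussian term $\langle \pv{\widehat{\vbeta}}{d} - \pv{\widehat{\vbeta}}{d'}, \vy_{i,pre} - \E[\vy_{i,pre}]\rangle$. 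Conditioning on the (independent) training data that produced $\pv{\widehat{\vbeta}}{d}$, a sub-Gaussian tail bound with $\|\pv{\widehat{\vbeta}}{d} - \pv{\widehat{\vbeta}}{d'}\|_2 \le 2\bar{\beta}\sqrt{T_0}$, followed by a union bound over the at most $k$ interventions, controls the noise term by $6\sigma\bar{\beta}\sqrt{2T_0\log(1/\alpha)}$ with probability $1-\alpha$. The quantity $\pv{\gamma}{d,d'}$ in \Cref{ass:margin-app} is calibrated so that, after subtracting the estimation-error and noise contributions, the lower bound on $\pv{\widehat{r}}{d}_{i} - \pv{\widehat{r}}{d'}_{i}$ is exactly $\delta(\|\pv{\vbeta}{d}-\pv{\widehat{\vbeta}}{d}\|_2 + \|\pv{\vbeta}{d'}-\pv{\widehat{\vbeta}}{d'}\|_2) + \delta\|\pv{\vbeta}{d} - \pv{\vbeta}{d'}\|_2$, which dominates $\delta\|\pv{\widehat{\vbeta}}{d} - \pv{\widehat{\vbeta}}{d'}\|_2$ by the triangle inequality. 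This yields the required strict margins, so $\bar{d}_i$ is reachable and $d^A_i \ge \bar{d}_i$.

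The main obstacle is the interplay between the many simultaneously active best-response directions and the identification $\bar{d}_i = d$. Unlike the two-intervention case, where the unit only ever moves along $\pv{\widehat{\vbeta}}{1} - \pv{\widehat{\vbeta}}{0}$, here several boundaries are active at once; the device that sidesteps this is to reduce achievability to the single explicit report $\Tilde{\vy}_{i,pre} = \vy_{i,pre}$ and let the gap clear all lower-index margins at once. Matching $\bar{d}_i$ to the type requires the good event to also deliver weak dominance over higher-indexed interventions: the strict separation over lower indices comes from the concentration argument above, while ties with higher interventions are handled by taking the expected-reward maximizer to be unique (generic position), under which the min-index argmax of the estimated rewards coincides with the type on the $1-\alpha$ event. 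Assembling the union bound so that a single event of probability $1-\alpha$ validates all per-pair margin inequalities simultaneously is the one place where the bookkeeping must be carried out with care.
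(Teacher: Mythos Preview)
Your proposal follows essentially the same two-step structure as the paper's proof: (i) a deterministic Cauchy--Schwarz argument that no intervention above the truthful argmax is reachable (which the paper compresses into a reference to ``Case 2'' of \Cref{prop:not-sp-app}), and (ii) using the reward-gap assumption and a Hoeffding bound to show the truthful report already clears all lower-index margins (which the paper summarizes as ``algebraic manipulation and a Hoeffding bound''). Your decomposition of $\pv{\widehat{r}}{d}_{i} - \pv{\widehat{r}}{d'}_{i}$ into expected-gap, estimation-error, and sub-Gaussian-noise pieces is exactly the calculation the paper displays in its single inequality, and your explicit flag that identifying $\bar d_i$ with the unit's type needs justification is a point the paper simply leaves implicit (it silently treats the $d$ in its opening sentence as the type when invoking $\pv{\gamma}{d,d'}$).
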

\begin{proof}
    Suppose that the intervention policy of~\Cref{alg:learning-sp} would assign intervention $d_i^A = d$ to unit $i$ if they were truthful. Unit $i$ cannot obtain any intervention $d' > d$ under the intervention policy of~\Cref{alg:learning-sp}, due to an argument analogous to Case 2 in the proof of~\Cref{prop:not-sp-app}. 

    Next we show that if $\Tilde{\vy}_{i,pre} = \vy_{i,pre}$, the intervention policy of~\Cref{alg:learning-sp} also assigns intervention $d$ to unit $i$. 
    Consider $d' < d$.

    \begin{equation*}
        \begin{aligned}
            \langle \pv{\widehat{\vbeta}}{d} - \pv{\widehat{\vbeta}}{d'}, \Tilde{\vy}_{i,pre} \rangle - \delta \| \pv{\widehat{\vbeta}}{d} - \pv{\widehat{\vbeta}}{d'} \|_2 &\geq \E[\pv{r}{d}_i] - \E[\pv{r}{d'}_i] - 6 \sigma \Bar{\beta} \sqrt{2 T_0 \log(1/\alpha)}\\
            &-(\sqrt{T_0} + \delta)(\|\pv{\vbeta}{d} - \pv{\widehat{\vbeta}}{d}\|_2 + \|\pv{\vbeta}{d'} - \pv{\widehat{\vbeta}}{d'}\|_2)\\ 
            &- \delta \|\pv{\vbeta}{d} - \pv{\vbeta}{d'}\|_2,
        \end{aligned}
    \end{equation*}
    with probability at least $1 - \alpha$, which follows from algebraic manipulation and a Hoeffding bound. 
    Since the expected reward gap is sufficiently large, 
    $\langle \pv{\widehat{\vbeta}}{d} - \pv{\widehat{\vbeta}}{d'}, \Tilde{\vy}_{i,pre} \rangle - \delta \| \pv{\widehat{\vbeta}}{d} - \pv{\widehat{\vbeta}}{d'} \|_2 > 0$
    with probability at least $1 - \alpha$ if $\Tilde{\vy}_{i,pre} = \vy_{i,pre}$. Therefore, unit $i$ can receive intervention $d$ under the intervention policy of~\Cref{alg:learning-sp} by behaving truthfully with probability at least $1 - \alpha$.\looseness-1
\end{proof}
If $d_{i}^A = d_{i}^*$, then $\E[\pv{r}{d_{i}}_{i}] - \E[\pv{r}{d_{i}^*}_{i}] = 0$. If $d_{i}^A \neq d_{i}^*$, we know that
\begin{equation*}
    \E[\pv{r}{d_{i}^*}_i] - |\pv{\widehat{r}}{d_{i}^*}_i - \E[\pv{r}{d_{i}^*}_{i}]| \leq \pv{\widehat{r}}{d_{i}^*}_{i} \leq \pv{\widehat{r}}{d_{i}^A}_{i} \leq \E[\pv{r}{d_{i}^A}_{i}] + |\pv{\widehat{r}}{d_{i}^A}_{i} - \E[\pv{r}{d_{i}^A}_{i}]|.
\end{equation*}
Therefore,
\begin{equation*}
\begin{aligned}
    \E[\pv{r}{d_{i}^A}_{i}] - \E[\pv{r}{d_{i}^*}_{i}] &\leq |\pv{\widehat{r}}{d_{i}^A}_{i} - \E[\pv{r}{d_{i}^A}_{i}]| + |\pv{\widehat{r}}{d_{i}^*}_{i} - \E[\pv{r}{d_{i}^*}_{i}]|\\
    &\leq \sum_{d=1}^k |\pv{\widehat{r}}{d}_{i} - \E[\pv{r}{d}_{i}]|\\
\end{aligned}
\end{equation*}
%
%

\end{document}